\newcommand{\keywords}[1]{\par\addvspace\baselineskip
\noindent\keywordname\enspace\ignorespaces#1}
 \gdef\xxxmark{%
   \expandafter\ifx\csname @mpargs\endcsname\relax 
     \expandafter\ifx\csname @captype\endcsname\relax 
       \marginpar{xxx}
     \else
       xxx 
     \fi
   \else
     xxx 
   \fi}
 \gdef\xxx{\@ifnextchar[\xxx@lab\xxx@nolab}
 \long\gdef\xxx@lab[#1]#2{{\bf [\xxxmark #2 ---{\sc #1}]}}
 \long\gdef\xxx@nolab#1{{\bf [\xxxmark #1]}}
 \long\gdef\xxx@lab[#1]#2{}\long\gdef\xxx@nolab#1{}%
\let\realbfseries=\bfseries
\def\bfseries{\realbfseries\boldmath}
\begin{document}

\newtheorem{conj}{Conjecture}
\def\figs{.}

\newcommand{\Mb}[1]{\mathbf{#1}}
\newcommand{\Mc}[1]{\mathcal{#1}}
\newcommand{\Ir}{\Mc{I}}
\newcommand{\bs}[1]{\boldsymbol{#1}}
\newcommand{\prob}[1]{{\rm Pr} \left[ #1 \right]}
\newcommand{\expect}[1]{{\rm E} \left[ #1 \right]}

\newcommand{\junk}[1]{}
\newcommand{\ignore}[1]{}

\newcommand{\R}[0]{{\ensuremath{\mathbb{R}}}}
\newcommand{\N}[0]{{\ensuremath{\mathbb{N}}}}
\newcommand{\Z}[0]{{\ensuremath{\mathbb{Z}}}}

\def\floor#1{\lfloor #1 \rfloor}
\def\ceil#1{\lceil #1 \rceil}
\def\seq#1{\langle #1 \rangle}
\def\set#1{\{ #1 \}}
\def\abs#1{\mathopen| #1 \mathclose|}   
\def\norm#1{\mathopen\| #1 \mathclose\|}

\newcommand{\poly}{\operatorname{poly}}
\newcommand{\polylog}{\operatorname{polylog}}

\newcommand{\sse}{\subseteq}
\newcommand{\I}{{\mathcal{I}}}
\newcommand{\M}{{\mathcal{M}}}
\newcommand{\B}{{\mathcal{B}}}
\newcommand{\C}{{\mathcal{C}}}
\newcommand{\Pri}{{\mathcal{P}}}
\newcommand{\Problem}{{\mathcal{P}}}
\newcommand{\e}{\varepsilon}
\newcommand{\eps}{\varepsilon}

\newcommand{\OPT}{\ensuremath{{\sf opt}}}
\newcommand{\ALG}{\ensuremath{{\sf alg}}}
\newcommand{\inarc}{\ensuremath{{\sf in}}}
\newcommand{\outarc}{\ensuremath{{\sf out}}}

\newcommand{\pr}[1]{{\rm Pr} \left[ #1 \right]}
\newcommand{\ex}[1]{{\rm E} \left[ #1 \right]}

\newcommand{\opt}[1]{\ensuremath{\mathsf{Opt(#1)}}\xpace}
\newcommand{\Opt}{\ensuremath{\mathsf{Opt}\xspace}}
\newcommand{\LPOpt}{\ensuremath{\mathsf{LPOpt}\xspace}}
\newcommand{\LP}{\ensuremath{\mathsf{LP}\xspace}}

\newcommand{\alg}{\ensuremath{\mathsf{Alg}}\xspace}
\newcommand{\timestep}{\ensuremath{2}}

\newcommand{\tildeo}{\smash{\widetilde{O}}}
\newcommand{\E}{\mathbb{E}}
\newcommand{\D}{\mathcal{D}}
\newcommand{\F}{{\ensuremath{\mathcal{F}}}}
\newcommand{\s}{\mathcal{S}}

\renewcommand{\theequation}{\thesection.\arabic{equation}}
\renewcommand{\thefigure}{\thesection.\arabic{figure}}

%

\newtheorem{SAlg}{Algorithm}
\newtheorem{model}{Model}

\newcommand{\Set}[1]{\mathds{#1}}
\newcommand{\Rand}[1]{\mathbf{#1}}


\mainmatter  

\title{AdCell: Ad Allocation in Cellular Networks}

\titlerunning{AdCell: Ad Allocation in Cellular Networks}

%
%
 \author{
 Saeed Alaei\footnote{Supported in part by NSF Grant CCF-0728839.}\inst{1}
 \and Mohammad T. Hajiaghayi\footnote{Supported in part by NSF CAREER Award, ONR Young Investigator Award, and Google Faculty Research Award.}\inst{1}\inst{2}
 \and Vahid Liaghat\footnote{Supported in part by NSF CAREER Award, ONR Young Investigator Award, and Google Faculty Research Award.}\inst{1}
 \and \\ Dan Pei\inst{2}
 \and Barna Saha\footnote{Supported in part by NSF Award CCF-0728839, NSF Award CCF-0937865.}\inst{1}
 \\
 \email{\{saeed, hajiagha, vliaghat, barna\}@cs.umd.edu, peidan@research.att.com}
 }
\authorrunning{S. Alaei, M.T. Hajiaghayi, V. Liaghat, D. Pei and B. Saha}

       \institute{University of Maryland, College Park, MD, 20742  \\ \and
       AT\&T Labs - Research, 180 Park Avenue, Florham Park, NJ 07932}

%
%

\toctitle{AdCell: Ad Allocation in Cellular Networks}
\tocauthor{AHLPS}
\maketitle

\begin{abstract}
With more than four billion usage of  cellular phones worldwide, mobile advertising has become an attractive
alternative to online advertisements. In this paper, we propose a new targeted advertising policy for Wireless Service
Providers (WSPs) via SMS or MMS- namely {\em AdCell}. In our  model,
a WSP charges the advertisers for showing their ads. Each advertiser has a valuation for specific types of customers in
various times and locations and has a limit on the maximum available budget. Each query is in the form of time and
location and is associated with one individual customer. In order to achieve a non-intrusive delivery, only a limited
number of ads can be sent to each customer. Recently, new services have been introduced that offer location-based
advertising over cellular network that fit in our model (e.g., ShopAlerts by AT\&T) .

We consider both online and offline version of the AdCell problem and develop approximation algorithms with constant
competitive ratio. For the online version, we assume that the appearances of the queries follow a stochastic
distribution and thus consider a Bayesian setting. Furthermore, queries may come from different distributions on different times. This model generalizes several previous advertising models such as
online secretary problem~\cite{HKP04}, online bipartite matching~\cite{firstmatch,vahabbeat} and
AdWords~\cite{saberi05}. Since our problem generalizes the well-known secretary problem, no non-trivial approximation
can be guaranteed in the online setting without stochastic assumptions. We propose an online algorithm that is simple,
intuitive and easily implementable in practice. It is based on pre-computing a fractional solution for the expected
scenario and relies on  a novel use  of dynamic programming  to compute the conditional expectations. We give tight
lower bounds on the approximability of some variants of the problem as well. In the offline setting, where
full-information is available, we achieve near-optimal bounds, matching the integrality gap of the considered linear
program. We believe that our proposed solutions can be used for other advertising settings where personalized
advertisement is critical.

\keywords{Mobile Advertisement, AdCell, Online, Matching}
\end{abstract}

\pagebreak
\sloppy

\section{Introduction}

In this paper, we propose a new mobile advertising concept called {\em Adcell}. More than $4$ billion cellular phones
are in use world-wide, and with the increasing popularity of smart phones, mobile advertising holds the prospect of
significant growth in the near future. Some research firms~\cite{cellstatics} estimate mobile advertisements to reach a
 business worth over 10 billion US dollars by 2012. Given the built-in advertisement solutions
 from popular smart phone OSes, such as iAds for Apple's iOS,  mobile advertising market is poised with even faster growth.

In the mobile advertising ecosystem, wireless service providers (WSPs) render the physical delivery
infrastructure, but so far WSPs have been more or less left out from profiting via mobile advertising
because of several challenges.
First, unlike web, search, application, and game providers, WSPs typically do not have users' application context, which makes it difficult to provide targeted advertisements. Deep Packet Inspection (DPI) techniques that examine packet traces in order to understand  application context, is often not an option because of privacy and legislation issues (i.e., Federal Wiretap Act).  Therefore, a targeted advertising solution for WSPs need to utilize \emph{only the information it is allowed to collect by government and by customers via opt-in mechanisms}.
Second, without the luxury of application context, targeted ads from WSPs require \emph{non-intrusive  delivery methods}.
While users are familiar with other ad forms such as banner, search, in-application, and in-game, push ads with no application context (e.g., via SMS) can be intrusive and annoying if not done carefully. The number and frequency of ads both need to be well-controlled.
Third, targeted ads from WSPs  should be well personalized such that the users have incentive to read the advertisements and take purchasing actions, especially given the requirement that the number of ads that can be shown to a customer is limited.

In this paper, we propose a new mobile targeted advertising strategy, \emph{AdCell}, for WSPs that deals with the above challenges. It takes advantage of the detailed real-time location information of users. Location can be tracked upon users' consent. This is already being done in some services offered by WSPs, such as Sprint's Family Location and AT\&T's Family Map, thus there is no associated privacy or legal complications.
To locate a cellular phone, it must emit a roaming signal to contact some nearby antenna tower, but the process does not require an active call. GSM localization is then done by multi-lateration\footnote{The process of locating an object by accurately computing the time difference of arrival of a signal emitted from that object to three or more receivers.} based on the signal strength to nearby antenna masts~\cite{LBStech}.
Location-based advertisement is not completely new. Foursquare mobile application allows users to explicitly "check in" at places
such as bars and restaurants, and the shops can advertise accordingly.
Similarly there are also automatic proximity-based advertisements using GPS or bluetooth.
For example, some GPS models from Garmin display ads for the nearby business based on the GPS locations~\cite{garmin}.
ShopAlerts by AT\&T \footnote{http://shopalerts.att.com/sho/att/index.html} is another application along the same line.
On the advertiser side, popular stores such as Starbucks are reported to have attracted significant footfalls via mobile coupons.

Most of the existing mobile advertising models are On-Demand, however, AdCell sends the ads via SMS, MMS, or similar methods without any prior notice. Thus to deal with the non-intrusive delivery challenge, we propose user subscription to advertising services that deliver only a {\em fixed number} of ads per month to its subscribers (as it is the case in AT\&T ShopAlerts). The constraint of delivering limited number of ads to each customer adds the
  main algorithmic challenge in the AdCell model (details in Section \ref{sec:model}). In order to overcome the incentive challenge, the WSP can ``pay'' users to read ads and purchase based on them through a reward program in the form of credit for monthly wireless bill.
To begin with, both customers and advertisers should sign-up for the AdCell-service provided by the WSP (e.g.,
currently there are 9 chain-companies participating in ShopAlerts). Customers enrolled for the service should sign an
agreement that their {\em location} information will be tracked; but solely for the advertisement purpose. Advertisers
(e.g., stores) provide their advertisements and a maximum chargeable budget to the WSP.
The WSP selects proper ads (these, for example, may depend on time and distance of a customer from a store) and sends them (via SMS)
to the customers. The WSP charges the advertisers for showing their ads and also for successful ads. An ad is deemed successful
if a customer visits the advertised store. Depending on the service plan, customers are entitled to receive different
number of advertisements per month. Several logistics need to be employed to improve AdCell experience and enthuse customers
into participation. We provide more details about these logistics in the full paper.

\subsection{AdCell Model \& Problem Formulation}
\label{sec:model}%


In the AdCell model, advertisers bid for individual customers based on their location and time.
The triple $(k,\ell,t)$ where $k$ is a customer, $\ell$ is a neighborhood
(location) and $t$ is a time forms a {\em query} and there is a bid amount (possibly zero) associated with each query
for each advertiser. This definition of query allows advertisers to customize their bids
based on customers, neighborhoods and time. We assume a customer can only be in one neighborhood at any
particular time and thus at any time $t$ and for each customer $k$, the queries $(k,\ell_{1},t)$ and
 $(k, \ell_{2},t)$ are mutually exclusive, for all distinct $l_1, l_2$. Neighborhoods are places of interest such as shopping malls, airports, etc.
 We assume that queries are generated at certain times (e.g., every half hour) and only
 if a customer stays within a neighborhood for a specified minimum amount of time. The formal problem definition
 of {\em AdCell Allocation} is as follows:

\paragraph*{{\bf AdCell Allocation}} {\em There are $m$ advertisers, $n$ queries and $s$ customers. Advertiser $i$ has a total budget
$b_i$ and bids $u_{ij}$ for each query $j$. Furthermore, for each customer $k \in [s]$, let
$S_k$ denote the queries corresponding to customer $k$ and $c_k$ denote the maximum number of ads 
which can be sent to customer $k$. The capacity $c_k$ is associated with customer $k$ and is dictated by the AdCell
plan the customer has signed up for. Advertiser $i$ pays $u_{ij}$ if his advertisement is shown for query $j$ and if
his budget is not exceeded. That is, if $x_{ij}$ is an indicator variable set to $1$, when advertisement for advertiser
$i$ is shown on query $j$, then advertiser $i$ pays a total amount of $\min(\sum_j  x_{ij} u_{ij}, b_i)$. The goal of
AdCell Allocation is to specify an advertisement allocation plan
such that the total payment $\sum_{i}\min(\sum_j  x_{ij} u_{ij}, b_i)$
is maximized.}

The AdCell problem is a generalization of  the budgeted AdWords allocation problem \cite{chakra:siam10,Srinivasan:2008}
with capacity constraint on each customer and thus is NP-hard. Along with the offline version of the problem, we also
consider its online version where queries arrive online and a decision to assign a query to an advertiser has to be
done right away. With arbitrary queries/bids and optimizing for the worst case, one cannot obtain any approximation
algorithm with ratio better than $\frac{1}{n}$. This follows from the observation that online AdCell problem also
generalizes the { \em secretary problem} for which no deterministic or randomized online algorithm can get
approximation ratio better than $\frac{1}{n}$ in the worst case.\footnote{ The reduction of the {\em secretary problem}
to AdCell problem is as follows: consider a single advertiser with large enough budget and a single customer with a
capacity of $1$. The queries correspond to secretaries and the bids correspond to the values of the secretaries. So we
can only allocate one query to the advertiser.}. Therefore, we consider a stochastic setting.

For the online AdCell problem, we assume that each query $j$ arrives with probability $p_j$. Upon arrival, each query
has to be either allocated or discarded right away. We note that each query encodes a customer id, a location id and a
time stamp. Also associated with each query, there is a probability, and a vector consisting of the bids for all
advertisers for that query. Furthermore, we assume that all queries with different arrival times or from different
customers are independent, however queries from the same customer with the same arrival time are mutually exclusive
(i.e., a customer cannot be in multiple locations at the same time).

\subsection{Our Results and Techniques}
\label{sec:results}%

Here we provide a summary of our results and techniques. We consider both the offline and
online version of the problem. In the offline version, we assume that we know exactly which queries arrive.
In the online version, we only know the arrival probabilities of queries (i.e., $p_1,\cdots,p_m$).

We can write the AdCell problem as the following random integer program in which $\Rand{I}_j$ is the
indicator random variable which is $1$ if query $j$ arrives and $0$ otherwise:
\begin{align}
    & \text{maximize.}                  & &\sum_i \min(\sum_j  \Rand{X}_{ij} u_{ij}, b_i) \tag{$IP_{BC}$} && \label{prog:IP_BC} \\
    & \forall j \in [n]:                & &\sum_i \Rand{X}_{ij} \le \Rand{I}_j \tag{$F$} \label{eq:IP_BC:F} \\
    & \forall k \in [s]:                & &\sum_{j\in S_k} \sum_i \Rand{X}_{ij} \le c_k \tag{$C$} \label{eq:IP_BC:C} \\
    &                                   & &\Rand{X}_{ij} \in \{0,1\} \notag
\end{align}
We will refer to the variant of the problem explained above as $IP_{BC}$. We also consider variants in which there are
either budget constraints or capacity constraints but not both. We refer to these variants as $IP_B$ and $IP_C$
respectively. The above integer program can be relaxed to obtain a linear program $LP_{BC}$, where we maximize $\sum_i
\sum_j  \Rand{X}_{ij} u_{ij}$ with the constraints ($F$), ($C$) and additional budget constraint $\sum_j \Rand{X}_{ij}
u_{ij} \le b_i$ which we refer to by $(B)$. We relax $\Rand{X}_{ij} \in \{0,1\}$ to $\Rand{X}_{ij} \in [0,1]$.
\begin{table}[b]
\centering
\begin{tabular}{|l|l|}
  \hline
  \hline
  {\bf Offline Version} & {\bf Online Version} \\
  \hline \rule{0pt}{3ex}
  \begin{minipage}[t]{0.48\linewidth}
  \begin{itemize}\itemsep8pt
  \item A $3\over 4$-approximation algorithm.
  \item A $4-\epsilon \over 4$-approximation algorithm when $\forall_i \max_j u_{ij}\leq \epsilon b_i$.
  \end{itemize}
  \end{minipage}
  &
  \begin{minipage}[t]{0.48\linewidth}
  \begin{itemize}\itemsep5pt
  \item A $\left(\frac{1}{2}-\frac{1}{e}\right)$-approximation algorithm.
  \item A $\left(1-\frac{1}{e}\right)$-approximation algorithm with only budget constraints.
  \item A $1 \over 2$-approximation algorithm with only capacity constraints.
  \end{itemize}
  \end{minipage}
  \\
  \hline
  \hline
    \end{tabular}
    \caption{Summary of Our Results\label{tab:res}}
  \label{table:summary}
\end{table}
We also refer to the variant of this linear program with only either constraints of type ($B$) or constraints of type
($C$) as $LP_B$ and $LP_C$.

In the offline version, for all $i \in [m]$ and $j \in [n]$, the values of $\Rand{I}_{j}$ are precisely known. For the
online version, we assume to know the $E[\Rand{I}_{j}]$ in advance and we learn the actual value of $\Rand{I}_{j}$
online. We note a crucial difference between our model and the i.i.d model. In i.i.d model the probability of the arrival of a query is independent of the time, i.e., queries arrive from the same distribution on each time. However, in AdCell model a query encodes time (in addition to location and customer id), hence we may have a different distribution on each time. This implies a prophet inequality setting in which on each time, an onlooker has to decide according to a given value where this value may come from a different distribution on different times (e.g. see \cite{Krengel77,HKS07}).

A summary of our results are shown in~\autoref{tab:res}. In the online version, we compare the expected revenue
of our solution with the expected revenue of the optimal offline algorithm. We should emphasis that we make no
assumptions about bid to budget ratios (e.g., bids could be as large as budgets). In the offline version, our result
matches the known bounds on the integrality gap.

We now briefly describe our main techniques.

\medskip \noindent \textbf{Breaking into smaller sub-problems that can be optimally solved using conditional expectation.}
Theoretically, ignoring the computational issues, any online stochastic optimization problem can be solved optimally
using conditional expectation as follows: At any time a decision needs to be made, compute the total expected objective
conditioned on each possible decision, then chose the one with the highest total expectation. These conditional
expectations can be computed by backward induction, possibly using a dynamic program. However for most problems,
including the AdCell problem, the size of this dynamic program is exponential which makes it impractical. We avoid this
issue by using a randomized strategy to break the problem into smaller subproblems such that each subproblem can be
solved by a quadratic dynamic program.

\medskip \noindent \textbf{Using an LP to analyze the performance of an optimal online algorithm against an optimal offline fractional
solution.} Note that we compare the expected objective value of our algorithm against the expected objective value of
the optimal offline fractional solution. Therefore for each subproblem, even though we use an optimal online algorithm,
we still need to compare its expected objective value against the expected objective value of the optimal offline
solution for that subproblem. Basically, we need to compare the expected objective of an stochastic online algorithm,
which works by maximizing conditional expectation at each step, against the expected objective value of its optimal
offline solution. To do this, we create a minimization linear program that encodes the dynamic program and whose
optimal objective is the minimum ratio of the expected objective value of the online algorithm to the expected
objective value of the optimal offline solution. We then prove a lower bound of $\frac{1}{2}$ on the objective value of
this linear program by constructing a feasible solution for its dual obtaining an objective value of $\frac{1}{2}$.

\medskip \noindent \textbf{Rounding method of \cite{saha:ics10} and handling hard capacities.} Handling ``hard capacities'',
those that cannot be violated, is generally tricky in various settings including facility location and many covering
problems \cite{Chuzhoy:2002,Gandhi:2003,Pal:2001}. The AdCell problem is a generalization of the budgeted AdWords
allocation problem with hard capacities on queries involving each customer. Our essential idea is to iteratively round
the fractional LP solution to an integral one based on the current LP structure. The algorithm uses the rounding
technique of \cite{saha:ics10} and is significantly harder than its uncapacitated version.

Due to the interest of the space we differ the omitted proofs to the full paper.

\section{Related Work}

Online advertising alongside search results is a multi-billion dollar business~\cite{lahaie} and is a major source of
revenue for search engines like Google, Yahoo and Bing. A related ad allocation problem is the AdWords
assignment problem~\cite{saberi05} that was motivated by sponsored search auctions. When modeled as an online bipartite
assignment problem, each edge has a weight, and there is a budget on each advertiser representing the upper bound on
the total weight of edges that might be assigned to it. In the offline setting, this problem is NP-Hard, and several
approximations have been proposed~\cite{adwordAzar,Andelman04,chakra:siam10,Srinivasan:2008}. For the online setting, it is
typical to assume that edge weights (i.e., bids) are much smaller than the budgets, in which case there exists a $(1 -
1/e)$-competitive online algorithm~\cite{saberi05}. Recently, Devanur and Hayes~\cite{Devenur09} improved the competitive ratio to $(1
- \epsilon)$ in the stochastic case where the sequence of arrivals is a random permutation.

Another related problem is the online bipartite matching problem which is introduced by Karp, Vazirani, and
Vazirani~\cite{firstmatch}. They proved that a simple randomized online algorithm achieves a $(1 - 1/e)$-competitive
ratio and this factor is the best possible. Online bipartite matching has been considered under stochastic assumptions
in~\cite{randmatch,vahabbeat,shayan10}, where improvements over  $(1 - 1/e)$ approximation factor have been shown. The
most recent of of them is the work of Manshadi et al.~\cite{shayan10} that presents an online algorithm with a
competitive ratio of $0.702$. They also show that no online algorithm can achieve a competitive ratio better than
$0.823$. More recently, Mahdian et al.\cite{mahdian11} and Mehta et al.\cite{mehta11} improved the competitive ratio
to 0.696 for unknown distributions.

\section{Online Setting}
\label{sec:saeed}%

In this section, we present three online algorithms for the three variants of the problem mentioned in the pervious
section (i.e., $IP_{B}$, $IP_{C}$ and $IP_{BC}$).

First, we present the following lemma which provides a means of computing an upper bound on the expected
revenue of any algorithm (both online and offline) for the AdCell problem.

\begin{lemma}[Expectation Linear Program]
Consider a general random linear program in which $\mathfrak{b}$ is a vector of random variables:
\begin{align*}
    & \text{(Random LP)} \\
    & \text{maximize.}                  & &c^T x \\
    & \text{s.t.}                       & &A x \le \mathfrak{b};\ \ x \ge 0 \\
\end{align*}
\vspace{-1 cm}

Let $OPT(\mathfrak{b})$ denote the optimal value of this program as a function of the random variables. Now
consider the following linear program:
\begin{align*}
    &\text{(Expectation LP)} \\
    & \text{maximize.}                  & &c^T x  \\
    & \text{s.t.}                       & &A x \le E[\mathfrak{b}];\ \ x \ge 0 \\
\end{align*}
\vspace{-1 cm}

We refer to this as the \emph{``Expectation Linear Program''} corresponding to the \emph{``Random Linear
Program''}. Let $\overline{OPT}$ denote the optimal value value of this program. Assuming that the original
linear program is feasible for all possible draws of the random variables, it always holds that
$E[OPT(\mathfrak{b})] \le \overline{OPT}$.
\end{lemma}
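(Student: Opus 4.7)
The plan is to prove the inequality by explicitly exhibiting a feasible solution to the Expectation LP whose objective equals $E[OPT(\mathfrak{b})]$; this is cleaner than invoking Jensen's inequality on the (concave) value function, though both routes work. The key observation is that averaging feasible solutions over realizations yields a feasible solution against the averaged right-hand side, and the objective is linear so it passes through the expectation.

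Concretely, for each realization $\mathfrak{b}$ let $x^*(\mathfrak{b})$ denote an optimal solution of the Random LP, which exists for every $\mathfrak{b}$ by the feasibility assumption (and the fact that if $OPT$ is unbounded the claim is vacuous; we can handle that case by noting boundedness follows from $c^T x \le c^T x^*(E[\mathfrak{b}])$ after the argument, or simply assume $OPT(\mathfrak{b})$ is finite almost surely, which is what the lemma tacitly requires). Define
\begin{equation*}
\bar{x} \;=\; E\bigl[x^*(\mathfrak{b})\bigr],
\end{equation*}
where the expectation is taken coordinatewise. Since $x^*(\mathfrak{b}) \ge 0$ for every realization, we have $\bar{x} \ge 0$. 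Moreover, since $A x^*(\mathfrak{b}) \le \mathfrak{b}$ pointwise, taking expectations and using linearity of expectation gives
\begin{equation*}
A \bar{x} \;=\; A\, E[x^*(\mathfrak{b})] \;=\; E[A x^*(\mathfrak{b})] \;\le\; E[\mathfrak{b}].
\end{equation*}
Hence $\bar{x}$ is a feasible solution to the Expectation LP.

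Evaluating its objective, again by linearity of expectation,
\begin{equation*}
c^T \bar{x} \;=\; c^T E[x^*(\mathfrak{b})] \;=\; E[c^T x^*(\mathfrak{b})] \;=\; E[OPT(\mathfrak{b})].
\end{equation*}
Since $\overline{OPT}$ is the \emph{maximum} of $c^T x$ over all feasible $x$ for the Expectation LP and $\bar{x}$ is one such feasible point, we conclude $E[OPT(\mathfrak{b})] = c^T \bar{x} \le \overline{OPT}$, as required.

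There is no real obstacle; the only subtlety is making sure the expectation $E[x^*(\mathfrak{b})]$ is well-defined (which is immediate when the support of $\mathfrak{b}$ is finite, as will be the case in the AdCell applications, and more generally follows from integrability assumptions implicit in the statement). I would mention this briefly but not dwell on it, since the target use in the paper is the finite-support setting where $\mathfrak{b}$ is built from indicators $\Rand{I}_j$.
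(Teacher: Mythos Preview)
Your proof is correct and follows essentially the same approach as the paper: define $\bar{x}=E[x^*(\mathfrak{b})]$, verify feasibility for the Expectation LP by taking expectations of the constraints, and use linearity of the objective to conclude. The paper's proof is slightly terser (it omits the explicit $\bar{x}\ge 0$ check and the well-definedness remarks), but the argument is the same.
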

\begin{proof}
Let $x^*(\mathfrak{b})$ denote the optimal assignment as a function of $\mathfrak{b}$. Since the random LP is
feasible for all realizations of $\mathfrak{b}$, we have $A x^*(\mathfrak{b}) \le \mathfrak{b}$. Taking the
expectation from both sides, we get $A E[x^*(\mathfrak{b})] \le E[\mathfrak{b}]$. So, by setting $x =
E[x^*(\mathfrak{b})]$ we get a feasible solution for the expectation LP. Furthermore, the objective value
resulting from this assignment is equal to the expected optimal value of the random LP. The optimal value of
the expectation LP might however be higher so its optimal value is an upper bound on the expected optimal
value of random LP.
\end{proof}

As we will see next, not only does the expectation LP provide an upper bound on the expected revenue, it also
leads to a good approximate algorithm for the online allocation as we explain in the following online
allocation algorithm. We adopt the notation of using an overline to denote the expectation linear program
corresponding to a random linear program (e.g. $\overline{LP}_{BC}$ for $LP_{BC}$). Next we present an online
algorithm for the variant of the problem in which there are only budget constrains but not capacity
constraints.

\begin{SAlg}
\label{alg:IP_B}%
\textsc{(Stochastic Online Allocator for $IP_B$)}\\
\vspace{-0.4 cm}
\begin{itemize}
\item
Compute an optimal assignment for the corresponding expectation LP (i.e. $\overline{LP}_B$). Let $x^*_{ij}$
denote this assignment. Note that $x^*_{ij}$ might be a fractional assignment.

\item
If query $j$ arrives, for each $i \in [m]$ allocate the query to advertiser $i$ with probability
$\frac{x^*_{ij}}{p_j}$.
\end{itemize}
\end{SAlg}

\begin{theorem}
\label{thm:IP_B}%
The expected revenue of \autoref{alg:IP_B} is at least $1-\frac{1}{e}$ of the optimal value of the
expectation LP (i.e., $\overline{LP}_B$) which implies that the expected revenue of \autoref{alg:IP_B} it is
at least $1-\frac{1}{e}$ of the expected revenue of the optimal offline allocation too. Note that this result
holds even if $u_{ij}$'s are not small compared to $b_i$. Furthermore, this result holds even if we relax the
independence requirement in the original problem and require negative correlation
instead.
\end{theorem}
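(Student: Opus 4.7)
The plan is to bound the expected revenue one advertiser at a time, then invoke the Expectation Linear Program Lemma. Fix advertiser $i$, and let $X_{ij}\in\{0,1\}$ be the indicator that the algorithm assigns query $j$ to $i$. By construction the algorithm first observes $\Rand{I}_j$ and then, conditional on $\Rand{I}_j=1$, assigns to $i$ with independent probability $x^*_{ij}/p_j$; hence $\pr{X_{ij}=1}=x^*_{ij}$, and $(X_{ij})_j$ inherits the across-$j$ dependence of $(\Rand{I}_j)_j$---mutually independent in the standing model, negatively associated in the relaxation. Let $Y_i=\sum_j X_{ij}u_{ij}$ and $\mu_i:=\E[Y_i]=\sum_j x^*_{ij}u_{ij}$; the LP budget constraint gives $\mu_i\le b_i$. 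I may additionally assume $u_{ij}\le b_i$ for every $j$ by replacing $u_{ij}$ with $\min(u_{ij},b_i)$, since this affects neither the LP value nor the realized payment $\min(Y_i,b_i)$.

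The crux of the proof is the per-advertiser inequality $\E[\min(Y_i,b_i)]\ge (1-1/e)\mu_i$, which I would establish first under independence by a two-step reduction. Step (a): replace each summand $X_{ij}u_{ij}$ (valued in $\{0,u_{ij}\}$ with success probability $x^*_{ij}$) by an independent Bernoulli $\eta_{ij}b_i$ of success probability $q_{ij}:=x^*_{ij}u_{ij}/b_i$. Since $u_{ij}\le b_i$ and $u_{ij}=(u_{ij}/b_i)\cdot b_i+(1-u_{ij}/b_i)\cdot 0$, this substitution is a mean-preserving spread; and because $z\mapsto\min(y+z,b_i)$ is concave in $z$ for every $y\ge 0$, replacing the summands one at a time (conditioning on the other, independent summands) can only decrease $\E[\min(Y_i,b_i)]$. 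Step (b): after the replacement, $Y_i$ is $b_i$ times a sum of independent Bernoullis of total expectation $\mu_i/b_i\le 1$, so
\[
\E[\min(Y_i,b_i)] = b_i\Bigl(1-\prod_j(1-q_{ij})\Bigr) \ge b_i\bigl(1-e^{-\mu_i/b_i}\bigr) \ge (1-1/e)\mu_i,
\]
using $1-t\le e^{-t}$ and the fact that $s\mapsto(1-e^{-s})/s$ is decreasing and equals $1-1/e$ at $s=1$.

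Summing the per-advertiser bound over $i$ and invoking the Expectation LP Lemma gives
\[
\E\!\left[\sum_i\min(Y_i,b_i)\right] \ge (1-1/e)\sum_{i,j} x^*_{ij}u_{ij} = (1-1/e)\,\overline{LP}_B \ge (1-1/e)\,\E[\OPT].
\]
For the relaxation to negatively correlated arrivals, the $X_{ij}$'s are negatively associated across $j$, and the standard Dubhashi--Ranjan result states that for any convex function $f$ and negatively associated random variables $Z_1,\dots,Z_n$ with independent copies $Z_1^*,\dots,Z_n^*$ of the same marginals, $\E[f(\sum_j Z_j)]\le\E[f(\sum_j Z_j^*)]$. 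Applying it with the \emph{concave} function $\min(\cdot,b_i)$ reverses the inequality, so the NA case inherits the $(1-1/e)\mu_i$ bound already proved. The main technical obstacle is step (a): a textbook small-bid analysis would bound $(Y_i-b_i)^+$ by Chernoff and lose a factor $(1-\epsilon)$ when $\max_j u_{ij}/b_i=\epsilon$, so to get the tight $1-1/e$ uniformly in the bid-to-budget ratio one really needs the mean-preserving-spread reduction, which exploits concavity of $\min(\cdot,b_i)$ directly rather than going through tail bounds.
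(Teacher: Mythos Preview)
Your proof is correct and reaches the same per-advertiser bound $\E[\min(Y_i,b_i)]\ge b_i\bigl(1-e^{-\mu_i/b_i}\bigr)\ge(1-1/e)\mu_i$ as the paper, but by a genuinely different route. The paper isolates a standalone lemma: for $X_1,\dots,X_n\in[0,C]$ independent (or negatively correlated) with $\mu=\E\bigl[\sum_t X_t\bigr]$, one has $\E\bigl[\min(\sum_t X_t,C)\bigr]\ge(1-e^{-\mu/C})C$. Its proof tracks the ``remaining budget'' $R_t=\max(R_{t-1}-X_t,0)$ with $R_0=C$, uses the pointwise inequality $\max(R-X,0)\le R(1-X/C)$ (valid whenever $0\le R,X\le C$) to obtain $\E[R_t]\le\bigl(1-\E[X_t]/C\bigr)\E[R_{t-1}]$, and telescopes to $\E[R_n]\le C\,e^{-\mu/C}$. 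You instead spread each two-point summand $\{0,u_{ij}\}$ to a $\{0,b_i\}$-valued Bernoulli of the same mean, invoke concavity of $\min(\cdot,b_i)$ to argue this can only lower the expectation, and then compute $b_i\bigl(1-\prod_j(1-q_{ij})\bigr)$ exactly on the Bernoulli sum. Your reduction makes it especially transparent why no small-bid assumption is needed; the paper's recursion has the complementary virtue of handling the dependent case inside the same calculation (via $\E[X_t R_{t-1}]\ge\E[X_t]\,\E[R_{t-1}]$) rather than through a separate comparison theorem.

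Two minor caveats on your negative-correlation paragraph. First, the convex-order comparison you quote---for negatively associated $Z_j$ with independent copies $Z_j^*$ and convex $f$, $\E[f(\sum_j Z_j)]\le\E[f(\sum_j Z_j^*)]$---is not in Dubhashi--Ranjan; it is due to Shao (2000) and Christofides--Vaggelatou. Second, the paper's definition of ``negative correlation'' is a conditional stochastic-monotonicity condition, not literally negative association, so strictly speaking you should either verify that your comparison holds under that definition, or remark that the mutually-exclusive query indicators actually arising in the AdCell model are negatively associated, which suffices for the application.
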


Note that allowing negative correlation instead of independence makes the above model much more general than
it may seem at first. For example, suppose there is a query that may arrive at several different times but
may only arrive at most once or only a limited number of times, we can model this by creating a new query for
each possible instance of the original query. These new copies are however negatively correlated. We define the negative correlation as follows: 

\begin{definition}[Negative Correlation]
\label{def:nc}%
Let $\Rand{X}_1, \cdots, \Rand{X}_n$ be random variables. For any subset $S \subset \{1, \cdots, n\}$, let
$\Rand{X}_S$ denote the subset of random variables indexed by $S$ and let $X_S$ and $X'_S$ denote two
realization of these random variables. We say that $\Rand{X}_1, \cdots, \Rand{X}_n$ are negatively correlated
iff for any random variable $\Rand{X}_i$ and any subset $\Rand{X}_S$ of random variables (not containing
$\Rand{X}_i$) and any constant $c$, if $X_S \le X'_S$ then $Pr[\Rand{X}_i \le c | \Rand{X}_S = X_S] \le
Pr[\Rand{X}_i \le c | \Rand{X}_S = X_S']$.
\end{definition}

\begin{remark}
It is worth mentioning that there is an integrality gap of $1-\frac{1}{e}$ between the optimal value of the
integral allocation and the optimal value of the expectation LP. So the lower bound of \autoref{thm:IP_B} is
tight. To see this, consider a single advertiser and $n$ queries. Suppose $p_j =\frac{1}{n}$ and $u_{1j}=1$
for all $j$. The optimal value of $\overline{LP}_B$ is $1$ but even the expected optimal revenue of the
offline optimal allocation is $1-\frac{1}{e}$ when $n\to \infty$ because with probability $(1-\frac{1}{n})^n$
no query arrives.
\end{remark}

To prove \autoref{thm:IP_B}, we use the following theorem:
\begin{theorem}
\label{thm:unc}%
Let $C$ be an arbitrary positive number and let $\Rand{X}_1, \cdots, \Rand{X}_n$ be independent random
variables (or negatively correlated) such that $\Rand{X}_i \in [0, C]$. Let
$\mu = E[\sum_i \Rand{X}_i]$. Then:
\begin{center}
$E[\min(\sum_i \Rand{X}_i, C)] \ge (1-\frac{1}{e^{\mu/C}}) C$
\end{center}
Furthermore, if $\mu \le C$ then the right hand side is at least $(1-\frac{1}{e}) \mu$.
\end{theorem}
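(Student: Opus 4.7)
The plan is to normalize and then chain two classical inequalities. By rescaling $X_i \mapsto X_i/C$ and $\mu \mapsto \mu/C$, it suffices to treat $C=1$: for $X_i \in [0,1]$ and $S := \sum_i X_i$ with $\mu = \E[S]$, I must show $\E[\min(S,1)] \ge 1 - e^{-\mu}$. Using the identity $\min(S,1) = 1 - (1-S)^+$, valid since $S \ge 0$, this rewrites as $\E[(1-S)^+] \le e^{-\mu}$, so the entire proof boils down to bounding the expected ``slack'' $(1-S)^+$.

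The first step dominates $(1-S)^+$ pointwise by $\prod_i(1-X_i)$. Since each factor lies in $[0,1]$, the product is nonnegative, so it dominates $(1-S)^+$ whenever $S \ge 1$; when $S < 1$, the Weierstrass product inequality $\prod_i(1-X_i) \ge 1 - \sum_i X_i = (1-S)^+$ delivers the bound. Weierstrass is proved by a one-line induction: if it holds for $n-1$, then $(1-X_n)\prod_{i<n}(1-X_i) \ge (1-X_n)(1-\sum_{i<n}X_i) = 1 - \sum_i X_i + X_n \sum_{i<n} X_i \ge 1 - \sum_i X_i$. Taking expectations gives $\E[(1-S)^+] \le \E\bigl[\prod_i(1-X_i)\bigr]$.

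The second step expands the expected product. In the independent case $\E\bigl[\prod_i(1-X_i)\bigr] = \prod_i(1-p_i)$, where $p_i := \E[X_i]$. Under the FOSD-style negative correlation of Definition~\ref{def:nc}, the decreasing transforms $1-X_i$ inherit negative association, so $\E\bigl[\prod_i(1-X_i)\bigr] \le \prod_i(1-p_i)$. Applying $1-p_i \le e^{-p_i}$ factor by factor, $\prod_i(1-p_i) \le e^{-\sum_i p_i} = e^{-\mu}$, which closes the chain. Rescaling back yields $\E[\min(\sum_i X_i, C)] \ge C(1 - e^{-\mu/C})$.

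For the ``furthermore'' claim, concavity of $f(t) = 1 - e^{-t}$ together with $f(0) = 0$ and $f(1) = 1 - 1/e$ yields the chord bound $f(t) \ge t(1-1/e)$ on $[0,1]$; plugging in $t = \mu/C \le 1$ gives $C(1-e^{-\mu/C}) \ge (1-1/e)\mu$. The one delicate point in the whole argument is the product-of-expectations bound under the paper's FOSD-style definition of negative correlation: this follows from the standard fact that negative association is preserved under coordinatewise monotone (here, decreasing) transformations, but a careful proof would either derive it directly from Definition~\ref{def:nc} by iterated conditioning, or cite the preservation theorem as a black box.
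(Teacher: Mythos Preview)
Your proof is correct and lands at exactly the same product bound as the paper, but the packaging differs in a way worth noting. Both arguments rewrite the claim as $\E[(C-S)^+] \le C e^{-\mu/C}$ and then bound the left side by $C\prod_i(1-p_i/C)$. You reach that in one shot via the Weierstrass inequality $(1-S)^+ \le \prod_i(1-X_i)$ (after normalizing $C=1$) and then factor the expectation of the product. The paper instead defines $R_i = (C-\sum_{j\le i} X_j)^+$ and proceeds recursively: the pointwise inequality $\max(R_{i-1}-X_i,0)\le R_{i-1}(1-X_i/C)$, valid because $R_{i-1}\le C$, gives $\E[R_i]\le (1-p_i/C)\,\E[R_{i-1}]$ after splitting the expectation, and iterating yields the same product. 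Your route is cleaner; the paper's sequential version has the advantage that under Definition~\ref{def:nc} the negative-correlation step reduces to the single pairwise statement that $X_i$ and $R_{i-1}$ are positively correlated (since $R_{i-1}$ is a decreasing function of $X_1,\dots,X_{i-1}$), which falls out directly from that FOSD-style definition by conditioning---this is precisely the ``one delicate point'' you flagged, and the iterative framing dissolves it. The ``furthermore'' clauses are identical: your concavity-of-$1-e^{-t}$ argument and the paper's $(1-x^a)\ge a(1-x)$ for $a\le 1$ are the same inequality.
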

\begin{proof}[ \autoref{thm:unc}]
\label{proof:thm:unc}%
Define the random variables $\Rand{R}_i = \max(\Rand{R}_{i-1}-\Rand{X}_i, 0)$ and $\Rand{R}_0 = C$. Observe
that for each $i$, $\Rand{R}_i = \max(C-\sum_{j=1}^i \Rand{X}_j, 0)$ so $\min(\sum_{j=1}^i \Rand{X}_j,
C)+\Rand{R}_i= C$. Therefore $E[\min(\sum_{j=1}^i \Rand{X}_j, C)]+E[\Rand{R}_i]=C$ and to prove the theorem
it is enough to show that $E[\Rand{R}_n] \le \frac{1}{e^{\mu/C}}\cdot C$. To show this we will prove the
following inequality:

\begin{align}
    \Rand{R}_i & \le (1-\frac{E[\Rand{X}_i]}{C}) \Rand{R}_{i-1} \label{eq:U} \tag{U}
\end{align}

Assuming that \eqref{eq:U} is true, we can conclude the following which proves the claim.

\begin{align*}
    \Rand{R}_n  & \le C \cdot \prod_{i=1}^n (1-\frac{E[\Rand{X}_i]}{C}) \\
                & \le C \cdot \frac{1}{e^{\mu/C}}
\end{align*}

The last inequality follows from the fact that $\sum_i \frac{E[\Rand{X}_i]}{C} = \frac{\mu}{C}$ and the right
hand side takes its maximum when for all $i$ : $\frac{E[\Rand{X}_i]}{C} = \frac{\mu}{n C}$ and $n \to
\infty$. Furthermore, to prove the second claim, we can use the fact that $(1-x^a) \ge (1-x)a$ for any $a \le
1$ and conclude that $(1-\frac{1}{e^{\mu/C}}) \ge (1-\frac{1}{e}) \frac{\mu}{C} C = (1-\frac{1}{e}) \mu$
whenever $\mu \le C$. Now it only remains to prove the inequality \eqref{eq:U}:

\begin{align*}
    E[\Rand{R}_i]   & = E[\max(\Rand{R}_{i-1}-\Rand{X}_i, 0)] \\
                    & \le E[\max(\Rand{R}_{i-1}-\Rand{X}_i \frac{\Rand{R}_{i-1}}{C}, 0)] \\
                    & = E[\Rand{R}_{i-1}-\Rand{X}_i \frac{\Rand{R}_{i-1}}{C}] \\
                    & = E[\Rand{R}_{i-1}]-\frac{1}{C}E[\Rand{X}_i \Rand{R}_{i-1}] \\
    \intertext{$\Rand{R}_{i-1}$ and $\Rand{X}_i$ are either independent or positively correlated so:}
                    & \le E[\Rand{R}_{i-1}]-\frac{1}{C}E[\Rand{X}_i] E[\Rand{R}_{i-1}]  \\
                    & = (1-\frac{E[\Rand{X}_i]}{C}) E[\Rand{R}_{i-1}]
\end{align*}


That completes the proof.
\end{proof}

Now we prove \autoref{thm:IP_B} using the above theorem.

\begin{proof}[\autoref{thm:IP_B}]
We apply \autoref{thm:unc} to each advertiser $i$ separately. From the perspective of advertiser $i$, each
query is allocated to her with probability $x^*_{ij}$ and by constraint (B) we can argue that
have $\mu = \sum_{j} x^*_{ij} u_{ij} \le b_i = C$ so $\mu \le C$ and by \autoref{thm:unc}, the expected
revenue from advertiser $i$ is at least $(1-\frac{1}{e}) (\sum_{j} x^*_{ij} u_{ij})$. Therefore, overall, we
achieve at least $1-\frac{1}{e}$ of the optimal value of the expectation LP and that completes the proof.
\end{proof}

Next we present an online algorithm for the variant of the problem in which there are only capacity
constrains but not budget constraints.

\begin{SAlg}
\label{alg:IP_C}%
\textsc{(Stochastic Online Allocator for $IP_C$)}\\
\vspace{-0.4 cm}
\begin{itemize}
\item
Compute an optimal assignment for the corresponding expectation LP (i.e. $\overline{LP}_C$). Let $x^*_{ij}$
denote this assignment. Note that $x^*_{ij}$ might be a fractional assignment.

\item
Partition the items to sets $T_1, \cdots, T_u$ in increasing order of their arrival time and such that all of
the items in the same set have the same arrival time.

\item
For each $k \in [s], t \in [u], r \in [c_k]$, let $E_{k,t}^r$ denote the expected revenue of the algorithm
from queries in $S_k$ (i.e., associated with customer $k$) that arrive at or after $T_t$ and assuming that
the remaining capacity of customer $k$ is $r$. We formally define $E_{k,t}^r$ later.

\item
If query $j$ arrives then choose one of the advertisers at random with advertiser $i$ chosen with a
probability of $\frac{x^*_{ij}}{p_j}$. Let $k$ and $T_t$ be respectively the customer and the partition which
query $j$ belongs to. Also, let $r$ be the remaining capacity of customer $k$ (i.e. $r$ is $c_k$ minus the
number of queries from customer $k$ that have been allocated so far). If $u_{ij}+E_{k,t+1}^{r-1} \ge
E_{k,t+1}^r$ then allocate query $j$ to advertiser $i$ otherwise discard query $j$.
\end{itemize}

We can now define $E_{k,t}^r$ recursively as follows:
\begin{align*}
    E_{k,t}^r &= \sum_{j \in T_t} \sum_{i \in [m]} x^*_{ij} \max(u_{ij}+E_{k,t+1}^{r-1}, E_{k,t+1}^{r})\\
                & + (1-\sum_{j \in T_t} \sum_{i \in [m]} x^*_{ij}) E_{k,t+1}^{r} \tag{EXP$_k$} \label{eq:EXP_k}
\end{align*}
Also define $E_{k,t}^0 = 0$ and $E_{k,u+1}^r = 0$. Note that we can efficiently compute $E_{k,t}^r$ using
dynamic programming.

\end{SAlg}

The main difference between \autoref{alg:IP_B} and \autoref{alg:IP_C} is that in the former whenever we
choose an advertiser at random, we always allocate the query to that advertiser (assuming they have enough
budget). However, in the latter, we run a dynamic program for each customer $k$ and once an advertiser is
picked at random, the query is allocated to this advertiser only if doing so increases the expected revenue
associated with customer $k$.

\begin{theorem}
\label{thm:IP_C}%
The expected revenue of \autoref{alg:IP_C} is at least $\frac{1}{2}$ of the optimal value of the expectation
LP (i.e., $\overline{LP}_C$) which implies that the expected revenue of \autoref{alg:IP_C} it is at least
$\frac{1}{2}$ of the expected revenue of the optimal offline allocation for $IP_C$ too.
\end{theorem}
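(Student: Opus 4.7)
The plan is to exploit the fact that $IP_C$ has no budget constraints, so the objective separates additively across customers. Writing $V_k := \sum_{j \in S_k, i} x^*_{ij} u_{ij}$, the LP value satisfies $\overline{LP}_C = \sum_k V_k$, and by construction of the dynamic program in \autoref{alg:IP_C} the expected revenue of the algorithm from customer $k$ equals exactly $E_{k,1}^{c_k}$ (the randomized advertiser choice marginalizes the per-query probabilities to $x^*_{ij}$, which is the coefficient driving~\eqref{eq:EXP_k}). Hence it suffices to prove $E_{k,1}^{c_k} \ge V_k/2$ for each customer $k$ separately; the second claim in the theorem then follows from the Expectation Linear Program lemma.

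Fix a customer $k$ and reinterpret the induced subproblem as a prophet-style online stopping problem. At each time $t$, with probability $x^*_{ij}$ for each $(j,i)$ with $j \in T_t \cap S_k$, an offer of value $u_{ij}$ is presented -- at most one offer per time step, because the queries in $T_t \cap S_k$ are mutually exclusive and only one advertiser is picked per arriving query. The policy must decide online whether to accept, subject to at most $c_k$ total acceptances. The total expected offer value is exactly $V_k$, the LP capacity constraint guarantees $\sum_{t,j,i} x^*_{ij} \le c_k$, and the dynamic program in \autoref{alg:IP_C} implements the \emph{optimal} online acceptance policy via the thresholds $\theta_t^r := E_{k,t+1}^r - E_{k,t+1}^{r-1}$.

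To establish $E_{k,1}^{c_k} \ge V_k/2$, I would follow the LP-duality recipe announced in the introduction. Construct a minimization linear program whose variables are the DP values $\{E_{k,t}^r\}$ together with auxiliary variables that linearize the $\max$ in recurrence~\eqref{eq:EXP_k} (each $\max(u_{ij} + E_{k,t+1}^{r-1},\, E_{k,t+1}^r)$ is replaced by a variable with two linear lower bounds), whose constraints are the resulting linearized recurrences, the boundary conditions $E_{k,t}^0 = E_{k,u+1}^r = 0$, and the capacity constraint $\sum_{t,j,i} x^*_{ij} \le c_k$, and whose objective is to minimize $E_{k,1}^{c_k}$ subject to the normalization $V_k = 1$. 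The LP's optimum then lower-bounds the worst-case ratio $E_{k,1}^{c_k}/V_k$ over all problem instances, and producing a feasible dual solution of objective value $1/2$ finishes the proof.

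The main obstacle is constructing the dual. Guided by the classical single-item prophet inequality, I expect the dual multipliers to enforce the following local guarantee: at each time step, the policy either collects at least half of the offer's expected value along the ``take'' branch, or preserves at least half along the ``pass'' branch via $E_{k,t+1}^r$. Turning this intuition into explicit dual weights that are simultaneously consistent across all $(t,r)$ pairs is the delicate step; the capacity constraint $\sum x^*_{ij} \le c_k$ enters precisely when the multipliers are aggregated across different remaining-capacity values $r$, which is the feature absent from the single-item ($c_k=1$) case and which makes the proof nontrivial for arbitrary $c_k$.
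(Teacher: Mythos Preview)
Your high-level plan coincides with the paper's: decompose by customer, recognize that the DP value $E_{k,1}^{c_k}$ is exactly the algorithm's expected revenue from customer $k$, encode the DP recursion as a minimization LP normalized by $V_k=1$, and certify a lower bound of $\tfrac12$ via a dual feasible solution. The paper isolates this as a separate ``Stochastic Uniform Knapsack'' lemma (\autoref{lem:suk}) and then invokes it per customer, exactly as you suggest.

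Where your proposal stops short is the dual construction, and here the paper uses two concrete reductions you have not identified. First, by convexity of $x\mapsto\max(x+a,b)$, one can replace all offers at a given time $t$ by a \emph{single} offer with probability $p_t=\sum_{j\in T_t}\sum_i x^*_{ij}$ and value $v_t=\sum_{j,i} x^*_{ij}u_{ij}/p_t$; this can only decrease $E_{k,1}^{c_k}$ while leaving $V_k$ unchanged, so the worst case has one item per time step. Second, the paper proves by backward induction that the DP values have decreasing marginals in $r$, namely $E_t^{r-1}\ge\tfrac{r-1}{r}\,E_t^r$. Substituting this into the recursion collapses the LP from the two-parameter family $\{E_t^r\}_{t,r}$ that you propose down to a one-parameter family $\{E_t^C\}_t$ with constraints $E_t^C\ge p_tv_t+(1-p_t/C)E_{t+1}^C$ and $E_t^C\ge E_{t+1}^C$. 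The dual of this one-dimensional LP admits an explicit feasible solution (set all $\alpha_t=\gamma=\tfrac12$ and telescope the $\beta_t$'s), and nonnegativity of the $\beta_t$'s is precisely where the capacity constraint $\sum_t p_t\le C$ is used. Your intuition that ``the capacity constraint enters when multipliers are aggregated across different $r$'' is on the right track, but the actual mechanism is to eliminate the $r$-dependence \emph{before} writing the LP, not to handle it in the dual.
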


\begin{remark}
 The approximation ratio of \autoref{alg:IP_C} is tight. There is no online algorithm that can achieve
in expectation better than $\frac{1}{2}$ of the revenue of the optimal offline allocation without making
further assumptions. We show this by providing a simple example. Consider an advertiser with a large enough
budget and a single customer with a capacity of $1$ and two queries. The queries arrive independently with
probabilities $p_1=1-\epsilon$ and $p_2 = \epsilon$ with the first query having an earlier arrival time. The
advertiser has submitted the bids $b_{11}=1$ and $b_{12}=\frac{1-\epsilon}{\epsilon}$. Observe that no online
algorithm can get a revenue better than $(1-\epsilon)\times 1 + \epsilon^2 \frac{1-\epsilon}{\epsilon}
\approx 1$ in expectation because at the time query 1 arrives, the online algorithm does not know whether or
not the second query is going to arrive and the expected revenue from the second query is just $1-\epsilon$.
However, the optimal offline solution would allocate the second query if it arrives and otherwise would
allocate the first query so its revenue is $\epsilon \frac{1-\epsilon}{\epsilon} + (1-\epsilon)^2\times 1
\approx 2$ in expectation.
\end{remark}

Next, we show that an algorithm similar to the previous one can be used when there are both budget
constraints and capacity constraints.

\begin{SAlg}
\label{alg:IP_BC}%
\textsc{(Stochastic Online Allocator for $IP_{BC}$)}\\
Run the same algorithm as in \autoref{alg:IP_C} except that now $x^*_{ij}$ is a fractional solution of
$\overline{LP}_{BC}$ instead of $\overline{LP}_{C}$.
\end{SAlg}

\begin{theorem}
\label{thm:IP_BC}%
The expected revenue of \autoref{alg:IP_BC} is at least $\frac{1}{2}-\frac{1}{e}$ of the optimal value of the
expectation LP (i.e., $\overline{LP}_{BC}$) which implies that the expected revenue of \autoref{alg:IP_BC} it
is at least $\frac{1}{2}-\frac{1}{e}$ of the expected revenue of the optimal offline allocation too.
\end{theorem}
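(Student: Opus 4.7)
The plan is to decompose the truncated revenue into a raw-sum minus an overflow, bound the sum from below using the per-customer guarantee behind \autoref{thm:IP_C}, and bound the overflow from above using \autoref{thm:unc} by coupling the DP-filtered allocation to the unfiltered (``naive'') allocation of \autoref{alg:IP_B}. Let $x^*$ be the optimal fractional solution of $\overline{LP}_{BC}$ used by \autoref{alg:IP_BC} and set $L := \sum_{ij} x^*_{ij} u_{ij}$. For each query $j$ and advertiser $i$, let $\Rand{Y}_{ij}$ denote the indicator that \autoref{alg:IP_BC} actually allocates $j$ to $i$ (i.e. the random advertiser draw picks $i$ \emph{and} the DP accepts), and let $\Rand{Y}^{\mathrm{nv}}_{ij}$ be the corresponding indicator if the DP step were removed (as in \autoref{alg:IP_B}). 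Couple the two processes to share the same arrivals and advertiser draws, so $\Rand{Y}_{ij} \le \Rand{Y}^{\mathrm{nv}}_{ij}$ pointwise. Write $\Rand{Z}_{ij} := \Rand{Y}_{ij} u_{ij}$ and $\Rand{Z}^{\mathrm{nv}}_{ij} := \Rand{Y}^{\mathrm{nv}}_{ij} u_{ij}$; WLOG $u_{ij} \le b_i$.

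The first step is to invoke the per-customer argument that underlies \autoref{thm:IP_C}. That argument depends only on constraint $(C)$ and is local to a single customer's dynamic program, so it applies verbatim with $x^*$ drawn from $\overline{LP}_{BC}$ instead of $\overline{LP}_{C}$: for every customer $k$,
\[
 E\Bigl[\sum_{j\in S_k}\sum_i \Rand{Z}_{ij}\Bigr] \;\ge\; \tfrac{1}{2} \sum_{j\in S_k}\sum_i x^*_{ij} u_{ij},
\]
and summing over $k$ gives $E[\sum_{ij}\Rand{Z}_{ij}] \ge L/2$. The second step handles the budget truncation per advertiser. Fix $i$ and consider the $\Rand{Z}^{\mathrm{nv}}_{ij}$ over $j$: arrivals of queries from different customers are independent, queries of the same customer-time are mutually exclusive, and the advertiser draws are independent across queries, so the $\Rand{Z}^{\mathrm{nv}}_{ij}$ are independent or negatively correlated. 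Each lies in $[0,b_i]$ and $\mu^{\mathrm{nv}}_i := \sum_j x^*_{ij} u_{ij} \le b_i$ by constraint $(B)$, so \autoref{thm:unc} applied with $C=b_i$ yields
\[
 E\Bigl[\max\bigl(\textstyle\sum_j \Rand{Z}^{\mathrm{nv}}_{ij} - b_i,\, 0\bigr)\Bigr] \;=\; \mu^{\mathrm{nv}}_i - E\bigl[\min(\textstyle\sum_j \Rand{Z}^{\mathrm{nv}}_{ij},\, b_i)\bigr] \;\le\; \mu^{\mathrm{nv}}_i/e.
\]
Since $\Rand{Z}_{ij}\le \Rand{Z}^{\mathrm{nv}}_{ij}$ pointwise and $x\mapsto \max(x-b_i,0)$ is monotone nondecreasing, the same bound transfers to the DP-filtered sum, and summing over $i$ gives $\sum_i E[\max(\sum_j \Rand{Z}_{ij} - b_i,0)] \le \sum_i \mu^{\mathrm{nv}}_i/e = L/e$.

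The conclusion then follows from the identity $\min(X,b) = X - \max(X-b,0)$:
\[
 E\Bigl[\sum_i \min\bigl(\textstyle\sum_j \Rand{Z}_{ij}, b_i\bigr)\Bigr] \;=\; E\Bigl[\sum_{ij}\Rand{Z}_{ij}\Bigr] - \sum_i E\Bigl[\max\bigl(\textstyle\sum_j \Rand{Z}_{ij} - b_i,\, 0\bigr)\Bigr] \;\ge\; \tfrac{L}{2} - \tfrac{L}{e} \;=\; \bigl(\tfrac{1}{2}-\tfrac{1}{e}\bigr)L,
\]
and the $\overline{LP}_{BC}$ lemma at the top of the section upper-bounds the optimal offline expected revenue by $L$. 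I expect the main subtlety to lie not in either half of the decomposition but in the interface between them: the lower bound on the sum requires reusing \autoref{thm:IP_C}'s proof with a \emph{different} fractional assignment $x^*$ (from $\overline{LP}_{BC}$ rather than $\overline{LP}_{C}$), which is why I emphasise that its LP-duality argument is truly local to a single customer and uses only $(C)$; and the upper bound on the overflow must be proved against the naive (un-filtered) allocation, where independence/negative correlation is clean, and then transferred to the true algorithm's allocation via the pointwise coupling $\Rand{Y}_{ij}\le \Rand{Y}^{\mathrm{nv}}_{ij}$ --- this coupling is what saves us from having to analyze the potentially messy dependence structure the DP introduces among allocations from the same customer.
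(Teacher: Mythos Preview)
Your proof is correct and follows the same high-level decomposition as the paper --- the per-customer $\tfrac12$-guarantee from the capacity analysis minus a $\tfrac1e$ overflow loss on the budget side --- though the paper's own argument is a two-sentence sketch that leaves both pieces informal. Your explicit identity $\min(X,b)=X-\max(X-b,0)$ together with the pointwise coupling $\Rand{Y}_{ij}\le\Rand{Y}^{\mathrm{nv}}_{ij}$ (so that \autoref{thm:unc} is invoked on the naive allocation, where negative correlation is clean, and the overflow bound is then transferred monotonically) is exactly what is needed to make that sketch rigorous.
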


Before we prove the last two theorems, we define a simple stochastic knapsack problem which will be used as a
building block in the proof of \autoref{thm:IP_C}.

\begin{definition}[Stochastic Uniform Knapsack]
\label{def:SUK}%
There is a knapsack of capacity $C$ and a sequence of $n$ possible items. Each item $j$ is of size $1$, has a
value of $v_j$ and arrives with probability $p_j$. Let $\Rand{I}_j$ denote the indicator random variable
indicating the arrival of item $j$. We assume that items can be partitioned into sets $T_1, \cdots, T_u$
based on their arrival times such that all the items in the same partition have the same arrival time and are
mutually exclusive (i.e. at most one of them arrives) and items from different partitions are independent.
Furthermore, we assume that $\sum_{j \in [n]} p_j \le C$.
\end{definition}

The following algorithm based on conditional expectation computes the optimal online allocation for this
problem:

\begin{SAlg}
\label{alg:EXP}%
\textsc{(Stochastic Uniform Knapsack - Optimal Online Allocator)}

Consider a stochastic uniform knapsack problem as defined in \autoref{def:SUK}.

\begin{itemize}
\item
For each $t \in [u]$ and $r \in [C]$, let $E_t^r$ denote the expected revenue of the algorithm from queries
that arrive at or after time $t$ (i.e. $T_t, \cdots, T_u$) and assuming that the remaining capacity of the
knapsack is $r$. We formally define $E_t^r$ later.

\item
If item $j$ arrives do the following. Let $t$ be the index of the partition which $j$ belongs to and let $r$
be the remaining capacity of the knapsack. Put item $j$ in the knapsack if $v_j+E_{t+1}^{r-1} \ge E_{t+1}^r$.
\end{itemize}

$E_t^r$ can be defined recursively as follows and can be efficiently computed using dynamic programming:

\begin{align}
    E_{t}^r &= \sum_{j \in T_t} p_j \max(v_{j}+E_{t+1}^{r-1},
                E_{t+1}^{r}) + (1-\sum_{j \in T_t} p_j) E_{t+1}^{r} \tag{EXP} \label{eq:EXP}
\end{align}

Also define $E_t^0=0$ and $E_{u+1}^r=0$.
\end{SAlg}

Clearly the above algorithm achieves the best revenue that any online algorithm can achieve in expectation
for the stochastic uniform knapsack. However, we need a stronger result since we need to compare its revenue
against the optimal value of the expectation LP.

\begin{lemma}
\label{lem:suk}%
Consider the stochastic uniform knapsack problem as defined in \autoref{def:SUK}. Let $O_o$ denote the random
variable representing the expected revenue of \autoref{alg:EXP} for this problem (i.e. $O_o = E_1^C$). Also
define $O_e = \sum_{j} p_j v_j$. Assuming that $\sum_{j } p_j \le C$, the following always holds:

\begin{align*}
    \frac{1}{2} O_e \le E[O_o] \le O_e
\end{align*}
\end{lemma}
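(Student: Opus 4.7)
My plan is to treat the two inequalities separately. The upper bound $E[O_o] \le O_e$ is the easier half and will follow directly from the Expectation LP lemma established earlier. Under the hypothesis $\sum_j p_j \le C$, the assignment $x_j = p_j$ is feasible for the knapsack's expectation LP (with constraints $\sum_j x_j \le C$ and $x_j \in [0, p_j]$) and achieves objective $\sum_j p_j v_j = O_e$; this is in fact LP-optimal since every $v_j \ge 0$. Because $O_o$ is pointwise dominated by the offline optimum $\mathrm{OPT}(\mathfrak{b})$ and the Expectation LP lemma yields $E[\mathrm{OPT}(\mathfrak{b})] \le O_e$, the desired upper bound is immediate.

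For the lower bound $E[O_o] \ge \tfrac{1}{2} O_e$, I will follow the route sketched in Section~\ref{sec:results}: encode the dynamic program defining $E_t^r$ as a minimization linear program over instances $\{(p_j, v_j)\}$ subject to $\sum_j p_j \le C$ and the normalization $O_e = 1$. Concretely, the recursion \eqref{eq:EXP} can be linearized by replacing each $\max(v_j + E_{t+1}^{r-1}, E_{t+1}^r)$ with a new variable lower-bounded by both of its arguments, producing a single LP whose objective is $E_1^C$. The goal is then to exhibit a dual feasible solution of value $\tfrac{1}{2}$; the tight instance (foreshadowed by the remark after Theorem~\ref{thm:IP_C}, with one customer, two queries, and a tuned $\epsilon$) tells us where the dual constraints must be tight, guiding the construction of the multipliers.

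A more direct inductive sketch would be as follows. Set $P_t := \sum_{j,\, s \ge t} p_j$ and $M_t := \sum_{j,\, s \ge t} p_j v_j$ and try to prove by reverse induction on $t$ that $P_t \le r$ implies $E_t^r \ge \tfrac{1}{2} M_t$; applying $\max(a,b) \ge \tfrac{1}{2}(a+b)$ inside \eqref{eq:EXP} rewrites the step as a convex combination of $E_{t+1}^r$, $E_{t+1}^{r-1}$ and $\sum_{j\in T_t} p_j v_j$, after which the inductive hypothesis controls the first two terms. The main obstacle here is that the IH condition $P_{t+1} \le r-1$ can fail when $\sum_{j \in T_t} p_j < 1$, in which case the bound on $E_{t+1}^{r-1}$ is not guaranteed and a naive reverse induction does not close. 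Overcoming this would require either a strengthened IH that tracks $\min(r, P_t)$ more carefully or a case split on whether $\sum_{j \in T_t} p_j \ge 1$; the cleanest route, and the one implicit in the introduction, is to bypass the induction entirely by producing the dual certificate described above, which is where I expect the most delicate bookkeeping to lie.
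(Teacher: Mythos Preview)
Your high-level plan for the lower bound matches the paper's: normalize to $O_e = 1$, encode the recursion \eqref{eq:EXP} as a minimization LP, and exhibit a dual feasible point of value $\tfrac{1}{2}$. What is missing from your outline are the two reductions the paper uses to make that dual construction tractable, and without them the ``delicate bookkeeping'' you flag does not obviously close.

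First, the paper collapses each partition $T_t$ to a single item with probability $p_t = \sum_{j\in T_t} p_j$ and value $v_t = \sum_{j\in T_t} (p_j/p_t)\, v_j$. By convexity of $x \mapsto \max(x+a,b)$ this replacement can only decrease $E_t^r$ (hence $E[O_o]$) while leaving $O_e$ unchanged, so it suffices to treat singleton partitions. Your linearization ``introduce a new variable for each $\max$'' would otherwise spawn one auxiliary variable per $(j,r)$ pair, and the resulting dual is unwieldy.

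Second, and this is the crucial step you do not mention, the paper proves by backward induction that $E_t^{r-1} \ge \tfrac{r-1}{r}\, E_t^r$ for all $t,r$ (decreasing marginal value in $r$). Substituting this into \eqref{eq:EXP} gives
\[
E_t^C \;\ge\; \max\!\Bigl(p_t v_t + \bigl(1-\tfrac{p_t}{C}\bigr) E_{t+1}^C,\; E_{t+1}^C\Bigr),
\]
which eliminates every $E_t^r$ with $r < C$ and leaves a one-dimensional chain in the variables $E_1^C,\dots,E_u^C,v_1,\dots,v_u$. The dual of this relaxed LP then has the explicit feasible assignment $\alpha_t = \gamma$, $\beta_1 = 1-\gamma$, $\beta_t = \beta_{t-1} - \tfrac{p_{t-1}}{C}\gamma$; nonnegativity of the $\beta_t$ follows from $\sum_t p_t \le C$ precisely when $\gamma \le \tfrac{1}{2}$, yielding the bound. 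Without the marginal-value inequality your LP retains $\Theta(uC)$ coupled variables $E_t^r$, and there is no evident closed-form dual certificate; the paper's proof of $E_t^{r-1} \ge \tfrac{r-1}{r} E_t^r$ (itself a nontrivial induction using $\max(a,b) \ge \max((1-\epsilon)a + \epsilon b, b)$) is the missing key lemma.

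Your alternative inductive sketch is correctly diagnosed as broken: the hypothesis $P_{t+1} \le r-1$ indeed fails, and the paper does not attempt to repair it.
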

\begin{proof}[ \autoref{lem:suk}]
\label{proof:lem:suk}%
The upper bound is trivial. Clearly, no algorithm (offline or online) can get more than $p_j v_j$ revenue in
expectation from each item $j$. So the total expected revenue is upper bounded by $O_e = \sum_{j} p_j v_j$.
Next we prove the lower bound.

To prove the lower bound we first narrow down the instances that would give the smallest
$\frac{E[O_o]}{O_e}$. The plan of the proof is as follows. First, we show that for each $t$ if we replace all
the items arriving at time $t$ (i.e. all items in set $T_t$) with a single item with probability $p_t =
\sum_{j \in T_t} p_j$ and value $v_t= \sum_{j\in T_t} v_j \frac{p_j}{p_t}$, we may only decrease $E[O_o]$ but
$O_e$ does not change. So this replacement may only decrease $\frac{E[O_o]}{O_e}$ and after the replacement,
each partition only contains one item. So, WLOG, we only need to prove the lower bound for instances in which
each partition contains one item. Next, we argue that if we scale all $v_j$'s by a constant, both $E[O_o]$
and $O_e$ are scaled by the same constant. So, WLOG, we assume that $O_e = 1$. Therefore, we only need to
prove a lower bound on the following program:

\begin{align*}
    & \text{minimize.} \quad    & E[O_o] && \\
    & \text{s.t.}               & O_e \ge 1 &&
\end{align*}

We then consider a linear relaxation of the above program and prove a lower bound of $\frac{1}{2}$ on this
relaxation which also implies a lower bound of $\frac{1}{2}$ for the original program. We prove this by
constructing a feasible solution for the dual of this linear program that achieve a value of $\frac{1}{2}$.

In what follows, we explain each step of the proof in more detail:\\
First of all, we claim that if we replace all of the items arriving at time $t$ (i.e., all items in $T_t$)
with a single item with probability $p_t = \sum_{j \in T_t} p_j$ and value $v_t= \sum_{j\in T_t} v_j
\frac{p_j}{p_t}$, then $O_o$ may decrease but $O_e$ is not affected. Let $E[O'_o]$ and $O'_e$ respectively
denote the result of making this replacement. The fact that $O_e$ is not affected is trivial because $v_t p_t
= \sum_{j\in T_t} p_j v_j$ so $O'_e = O_e$. Let ${E'}_{t}^{r}$ denote the expectation after replacing all
items in $T_{t^*}$ with a single item as explained. For all values of $t > t^*$ nothing is affected so
${E'}_k^r = E_k^r$. Consider what happens at $t=t^*$ when we make the replacement:

\begin{align*}
\intertext{From \eqref{eq:EXP} we have:}
    E_{t}^r &= \sum_{j \in T_t} p_j \max(v_{j}+E_{t+1}^{r-1},
                E_{t+1}^{r}) + (1-\sum_{j \in T_t} p_j) E_{h,t+1}^{r} \\
\intertext{for any convex function $f(\cdot)$ and nonnegative $\alpha_i$'s with $\sum_i \alpha_i =1$ it
always holds that $\sum_i\alpha_i f(x_i) \ge f(\sum_i\alpha_i x_i)$ and $\max(x+a,b)$ is a convex function of
$x$ so:}
    E_{t}^r &= p_t \sum_{j \in T_t} \frac{p_j}{p_t} \max(v_{j}+E_{t+1}^{r-1},
                E_{t+1}^{r}) + (1-p_t) E_{h,t+1}^{r} \\
            &\ge p_t \max(\sum_{j \in T_t} \frac{p_j}{p_t} v_{j}+E_{t+1}^{r-1}, E_{t+1}^{r}) + (1-p_t) E_{h,t+1}^{r} \\
            & = p_t \max(v_t+E_{t+1}^{r-1}, E_{t+1}^{r}) + (1-p_t) E_{h,t+1}^{r} \\
            & = {E'}_t^r
\end{align*}

So we proved that ${E'}_{t}^r \le E_t^r$ for $t = t^*$. Furthermore, notice that according to equation
\eqref{eq:EXP}, for each $t$, $E_{t-1}^r$ is an increasing function of $E_{t}^r$ and $E_{t}^{r-1}$ so if
$E_{t}^r$ decreases then $E_{t-1}^r$ may only decrease so for all values of $t \le t^*$ we can argue that
${E'}_t^r \le E_t^r$ and in particular $E[O'_o] = {E'}_1^C \le E_1^C = E[O_o]$. That means the replacement
may only decrease the expected revenue of our algorithm. So if we replace all the items in each $T_t$ with a
single item as explained above one by one we get an instance in which each partition only contains one item
and with a possibly lower expected revenue from our algorithm. Therefore, WLOG, it is enough to prove a lower
bound for the case where each partition contains one item.

Since scaling all  $v_j$'s by a constant scales both $E[O_o]$ and $O_e$ by the same constant, we can scale
all $v_j$'s so that $O_e = 1$. So, WLOG, we only need to prove the lower bound for cases where $O_e=1$. Now,
we argue that the optimal value of the following program gives a lower bound on $E[O_o]$. Therefore, we only
need to prove the optimal value of this program is bonded below by $\frac{1}{2}$.

\begin{align*}
    & \text{minimize.} \quad    & E[O_o]  && \\
    & \text{s.t.}               & O_e \ge 1 &&
\end{align*}

We now rewrite the the previous program as the following linear program with variables $\bm{E}_t^r$ and
$\bm{v}_t$ (with $t \in [u]$ and $r \in [C]$ by using the definition of $E_t^r$ from \eqref{eq:EXP}. Note
that $E[O_o] = E_1^C$. Also note that, in the following, we address each item by the index of the partition
to which it belongs.

\begin{align*}
    & \text{minimize.}                  & & \quad \quad \bm{E}_1^C && \notag \\
    & \forall t \in [u-1], \forall r \in [C]:
                                        & & \bm{E}_t^r  \ge p_t (\bm{v}_t + \bm{E}_{t+1}^{r-1})+ (1-p_t) \bm{E}_{t+1}^{r}  \\
    & \forall t \in [u-1], \forall r \in [C]:
                                        & & \bm{E}_t^r \ge \bm{E}_{t+1}^r \\
    & \forall r \in [C]:                & & \bm{E}_u^r \ge p_u \bm{v}_u \\
    &                                   & & \sum_t p_t v_t \ge 1 \\
    &                                   & & \bm{v}_t \ge 0, \quad \bm{E}_t^r \ge 0
\end{align*}

Notice that any feasible assignment for the original program is also a feasible assignment for the above
program but not vice versa. So the above program is a linear relaxation of the original program and therefore
its optimal value is a lower bound for the optimal value of the original program. \footnote{It is not hard to
show that any optimal assignment of this linear program is also a feasible optimal assignment for the
original program so the optimal value of the linear program and the original program are in fact equal.}

The above linear program is still not quite easy to analyze, so we consider a looser relaxation as we explain
next. First, it is not hard to show that $E_t^r$ as defined in \eqref{eq:EXP} has decreasing marginal value
in $r$ which implies $E_t^{r-1} \ge \frac{r-1}{r} E_t^r$ (This can be proved by induction on $t$ with the
base case being $t=u$ and then proving for smaller $t$'s. We will prove this formally later). Combining this
with the definition of $E_t^r$ from \eqref{eq:EXP}, we get the following inequality:

\begin{align*}
    E_t^r   & = p_t \max(v_t + E_{t+1}^{r-1}, E_{t+1}^{r}) + (1-p_t) E_{t+1}^{r} \\
            & = \max(p_t v_t + p_t E_{t+1}^{r-1}+ (1-p_t) E_{t+1}^{r}, E_{t+1}^{r}) \\
            & \ge \max(p_t (v_t + \frac{r-1}{r} E_{t+1}^{r})+ (1-p_t) E_{t+1}^{r}, E_{t+1}^{r}) \\
            & = \max(p_t v_t + (1-\frac{p_t}{r}) E_{t+1}^{r}, E_{t+1}^{r})
\end{align*}


Next, we can write the following more relaxed linear program with only variables $\bm{E}_t^C$ and $\bm{v}_t$
(with $t \in [u]$):

\begin{align}
    & \text{minimize.}                  & \quad \quad & \bm{E}_1^C && \notag \\
    & \forall t \in [u-1]:              & & \bm{E}_t^C - p_t \bm{v}_t - (1-\frac{p_t}{r}) E_{t+1}^{r} \ge 0 \tag{$\bm{\alpha}_t$}\label{eq:alpha}\\
    &                                   & & \bm{E}_1^C - p_u v_u \ge 0 \tag{$\bm{\alpha_u}$} \label{eq:alpha_u}\\
    & \forall t \in [u-1]:              & & \bm{E}_t^C - \bm{E}_{t+1}^C \ge 0 \tag{$\bm{\beta}_t$} \label{eq:beta}\\
    &                                   & & \sum_{t=1}^{u} p_t \bm{v}_t \ge 1 \tag{$\bm{\gamma}$} \label{eq:gamma} \\
    &                                   & & \bm{v}_t \ge 0 , \quad \bm{E}_t^C \ge 0 \notag
\end{align}

Next, we show that the optimal value of the above program is bounded below by $\frac{1}{2}$ which implies
that the optimal value of the original program is also bounded below by $\frac{1}{2}$ and that completes the
proof. To do this, we present a feasible assignment for the dual program that obtains an objective value of
at least $\frac{1}{2}$. Note that the objective value of any feasible assignment for the dual program gives a
lower bound on the optimal value of the primal program. The following is the dual program:

\begin{align}
    & \text{maximize.}                  & & \bm{\gamma} && \notag \\
    & \forall t \in [u]:                & & \bm{\gamma} p_t - \bm{\alpha}_t p_t \le 0  \tag{$\bm{v}_t$} \\
    &                                   & & \bm{\alpha}_1 + \bm{\beta}_1 \le 1 \tag{$\bm{E}_1^C$} \\
    & \forall t \in [2 \cdots u-1]:     & & \bm{\alpha}_t + \bm{\beta}_t - (1-\frac{p_{t-1}}{C}) \bm{\alpha}_{t-1} - \bm{\beta}_{t-1} \le 0  \tag{$\bm{E}_t^C$} \\
    &                                   & & - (1-\frac{p_{u-1}}{C}) \bm{\alpha}_{u-1} - \bm{\beta}_{u-1} \le 0 \tag{$\bm{E}_u^C$} \\
    &                                   & & \bm{\alpha}_t \ge 0, \quad \bm{\beta}_t \ge 0, \quad \bm{\gamma} \ge 0 \notag
\end{align}

Now, suppose we set all $\bm{\alpha}_t = \bm{\gamma}$ and $\bm{\beta}_t = \bm{\beta}_{t-1}-\frac{p_{t-1}}{C}
\bm{\gamma}$ for all $t$ except $\bm{\beta}_1 = 1-\bm{\gamma}$. From this assignment, we get $\bm{\beta}_t =
1-\bm{\gamma} - \bm{\gamma} \sum_{k=1}^{t-1}\frac{p_k}{C}$. Observe that we get a feasible solution as long
as all $\bm{\beta}_t$'s resulting from this assignment are non-negative. Furthermore, it is easy to see that
$\bm{\beta}_t > 1 - \gamma -\gamma \frac{\sum_{k=1}^u p_k}{C} =1 - 2\bm{\gamma}$. Therefore, for
$\bm{\gamma}=\frac{1}{2}$, all $\bm{\beta}_t$'s are non-negative and we always get a feasible solution for
the dual with an objective value of $\frac{1}{2}$ which completes the main proof. Next, we present the proof
of our earlier claim that $E_t^{r-1} \ge \frac{r-1}{r} E_t^r$.

We now prove that $E_t^{r} \ge \frac{r}{r+1} E_t^{r+1}$ by induction on $t$ with the base case being $t=u$
which is trivially true because $E_u^r = p_u v_u$ for all $r \ge 1$. Next we assume that our claim holds for
$t+1$ and all values of $r$. We then prove it for $t$ and all values of $r$ as follows:

\begin{align*}
    E_t^r   & = p_t \max(v_t + E_{t+1}^{r-1}, E_{t+1}^{r}) + (1-p_t) E_{t+1}^{r} \\
            & = \max(p_t (v_t + E_{t+1}^{r-1})+ (1-p_t) E_{t+1}^{r}, E_{t+1}^{r}) \\
\intertext{\sl Observe that $\max(a,b) \ge \max((1-\epsilon)a+\epsilon b, b)$ for all $\epsilon \in [0,1]$
so:}
    E_t^r   & \ge \max((1-\epsilon)[p_t (v_t + E_{t+1}^{r-1})+ (1-p_t) E_{t+1}^{r}]\\
            & +\epsilon E_{t+1}^{r}, E_{t+1}^{r}) \\
            & = \max((1-\epsilon) p_t (v_t + E_{t+1}^{r-1}+ \frac{\epsilon}{1-\epsilon} E_{t+1}^{r}) \\
            & + (1-p_t) E_{t+1}^{r}, E_{t+1}^{r}) \\
\intertext{\sl Now by applying the induction hypothesis on $E_{t+1}^{r-1}$ and $E_{t+1}^{r}$ and setting
$\epsilon=\frac{1}{r+1}$:}
    E_t^r   & \ge \max(\frac{r}{r+1} p_t [v_t + \frac{r-1}{r}E_{t+1}^{r}+ \frac{1}{r} E_{t+1}^{r}]\\
            & + (1-p_t) \frac{r}{r+1}E_{t+1}^{r+1}, \frac{r}{r+1}E_{t+1}^{r+1}) \\
            & = \frac{r}{r+1}\max(p_t [v_t + E_{t+1}^{r}]+ (1-p_t) E_{t+1}^{r+1}, E_{t+1}^{r+1}) \\
            & = \frac{r}{r+1} E_t^{r+1}
\end{align*}

So we proved that $E_t^{r} \ge \frac{r}{r+1} E_t^{r+1}$ and that completes the proof.

\end{proof}

Now we can prove the main two theorems using \autoref{lem:suk}.

\begin{proof}[\autoref{thm:IP_C}]
We apply \autoref{lem:suk} to the subset of queries associated with each customer $k$ (i.e. $S_k$)
separately. We may think of this as having a knapsack of capacity $c_k$ for customer $k$. Each pair of
advertiser/query, $(i,j)$ is a knapsack item with value $u_{ij}$. All knapsack items of the form $(i,j)$ with
the same $j$ are mutually exclusive (because at most one advertiser is chosen at random) and they all have
the same arrival time. Therefore, by applying \autoref{lem:suk}, from the knapsack of each customer $k$ we
get at least $\frac{1}{2} (\sum_{j \in S_k} \sum_i x^*_{ij} u_{ij})$ in expectation. So overall, we get
$\frac{1}{2}$ of the optimal value of the expectation LP and that completes the proof.
\end{proof}

\begin{proof}[\autoref{thm:IP_BC}]
The proof is essentially the same as the proof of \autoref{thm:IP_C}. The only difference is that we may also
lose at most a factor of $\frac{1}{e}$ from each advertiser due to going over the budget limit. Note that
this is a gross overestimation because using conditional expectation on each customer may result in
discarding some of the queries which would make it less likely for advertisers to hit their budget limit. So
overall, we get at least $\frac{1}{2}-\frac{1}{e}$ of the optimal value of the expectation LP.
\end{proof}

\section{Offline Setting}
\label{sec:offline}

In the offline setting, we explicitly know  all the queries, that is all the customers, locations, items triplets on
which advertisers put their bids. We want to obtain an allocation of advertisers to queries such that the total
payment obtained from all the advertisers is maximized. Each advertiser pays an amount equal to the minimum of
his budget and the total bid value on all the queries assigned to him. Since, the problem is NP-Hard, we
can only obtain an approximation algorithm achieving revenue close to the optimal. The fractional optimal solution of
$LP_{BC}$ (with explicit values for $\mathcal{I}_{j}, j \in [n]$) acts as an upper bound on the optimal revenue.
We round the fractional optimal solution to a nearby integer solution and establish the following bound.

\begin{theorem}
\label{theorem:approx}
Given a fractional optimal solution for $LP_{BC}$, we can obtain an integral solution for AdCell with budget and capacity constraints
that obtains at least a profit of $\frac{4-\max_{i}{\frac{u_{i,max}}{b_i}}}{4}$ of the profit obtained by optimal fractional allocation
and maintains all the capacity constraints exactly.
\end{theorem}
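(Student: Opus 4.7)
The plan is to round the fractional optimum $x^*$ of $LP_{BC}$ via a dependent randomized rounding that treats the customer capacities (C) as hard constraints (never violated) and preserves the per-query constraints (F), while allowing the budget constraint (B) to be exceeded only at the marginal bid of each advertiser. My target is the per-advertiser guarantee
\[
\E\bigl[\min(X_i,b_i)\bigr]\;\ge\;\Bigl(1-\tfrac{u_{i,\max}}{4b_i}\Bigr)\sum_j x^*_{ij}u_{ij},
\]
where $X_i:=\sum_j \hat x_{ij}u_{ij}$ is the realized load on advertiser $i$. Summing over $i$ and lower-bounding each factor by $\min_i(1-u_{i,\max}/(4b_i))=(4-\max_i u_{i,\max}/b_i)/4$ gives the theorem, since the LP objective upper-bounds the optimal integral profit.

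\textbf{Rounding step.} View $x^*$ as weights on a bipartite graph with advertisers on one side and queries on the other, and adapt the Saha–Srinivasan dependent rounding~\cite{saha:ics10} to the hard-capacity setting. I would iteratively locate a cycle in the fractional support (or an even path whose endpoints lie at query vertices with slack) and shift weights along alternating edges by a signed amount $\delta$, chosen maximally so that some edge weight drops to $0$ or rises to $1$. The updates are designed so that (i) every query's total weight stays in $[0,1]$, and (ii) for each customer $k$ the aggregate $\sum_{j\in S_k}\sum_i x_{ij}$ is invariant, hence never exceeds $c_k$. When no fractional cycle remains, the support is a forest, and a final rooted randomized rounding on each tree preserves the marginals $\Pr[\hat x_{ij}=1]=x^*_{ij}$ while keeping the customer-sum constraints satisfied with equality.

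\textbf{Budget analysis.} Fix advertiser $i$. By marginal preservation, $\E[X_i]=\sum_j x^*_{ij}u_{ij}\le b_i$, and each summand $\hat x_{ij}u_{ij}$ lies in $[0,u_{i,\max}]$. Because the rounding produces negatively correlated indicators on the edges incident to $i$, the one-sided deviation bound for sums of bounded independent variables carries over (the extremal configuration for the expected one-sided excess above the mean, when values are in $[0,u_{i,\max}]$, is the two-point distribution that gives the familiar $u_{i,\max}/4$ bound). Consequently
\[
\E\bigl[(X_i-b_i)^+\bigr]\;\le\;\frac{u_{i,\max}}{4},
\]
so that
\[
\E\bigl[\min(X_i,b_i)\bigr]\;=\;\E[X_i]-\E\bigl[(X_i-b_i)^+\bigr]\;\ge\;\sum_j x^*_{ij}u_{ij}-\frac{u_{i,\max}}{4}\;\ge\;\Bigl(1-\frac{u_{i,\max}}{4b_i}\Bigr)\sum_j x^*_{ij}u_{ij},
\]
using $\sum_j x^*_{ij}u_{ij}\le b_i$ in the final step. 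Summing over advertisers yields the claimed ratio.

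\textbf{Main obstacle.} The delicate part is the rounding itself: I need the customer capacities to remain strict invariants throughout the iterative process, yet I also need sufficient negative correlation among the $\hat x_{ij}$'s incident to each advertiser to justify the $u_{i,\max}/4$ overflow inequality of the budget analysis. Because the capacity constraint at customer $k$ couples edges incident to many different advertisers, each cycle/path update must cancel its net effect on every customer's aggregate simultaneously; designing such updates (and proving that at least one always exists until the support is a forest) is precisely the hard-capacity extension of~\cite{saha:ics10} that the introduction flags as the main technical difficulty.
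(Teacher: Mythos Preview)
Your budget analysis has two concrete errors that break the argument.

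First, the overflow inequality $\E[(X_i-b_i)^+]\le u_{i,\max}/4$ is not true for a sum of many bounded, negatively correlated indicators. The ``two-point extremal'' fact you invoke holds for a \emph{single} random variable in $[0,u_{i,\max}]$; for a sum $X_i=\sum_j \hat x_{ij}u_{ij}$ the one-sided deviation $\E[(X_i-\E X_i)^+]$ can be of order $u_{i,\max}\sqrt{n}$ even under independence (take $n$ i.i.d.\ $\{0,u_{i,\max}\}$ fair coins). Negative correlation shrinks variance but does not collapse the sum to a two-point law, so the $u_{i,\max}/4$ bound is unjustified.

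Second, even granting that bound, your last inequality goes the wrong way: with $S:=\sum_j x^*_{ij}u_{ij}\le b_i$, the step $S-\tfrac{u_{i,\max}}{4}\ge\bigl(1-\tfrac{u_{i,\max}}{4b_i}\bigr)S$ is equivalent to $S\ge b_i$, which is the opposite of the LP constraint. So an additive loss of $u_{i,\max}/4$ per advertiser does not yield the multiplicative ratio you claim.

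The paper avoids both issues by a different mechanism: its rounding does \emph{not} aim at marginal preservation plus negative correlation. Instead, along each cycle/path update it keeps the weighted budget sum $\sum_j x_{ij}u_{ij}$ \emph{exactly invariant} for every advertiser of degree at least two in the current fractional support (see the equality constraints of type $x_{i_t,j_t}b_{i_t,j_t}+x_{i_t,j_{t+1}}b_{i_t,j_{t+1}}=\text{const}$ in each case). Consequently, when an advertiser first becomes a leaf it has a deterministic partial load $a$ plus a single remaining fractional edge of bid $d$ and value $p$, with $a+dp$ equal to its original LP load. The entire budget loss is then the gap between $(1-p)\min(b_i,a)+p\min(b_i,a+d)$ and $\min(b_i,a+dp)$, which is a two-point computation and yields exactly the $(4-d/b_i)/4$ ratio. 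The hard-capacity work (the case analysis on the forest, jumping across trees via shared tight sets, etc.) is all in service of maintaining the assign and capacity constraints while preserving these per-advertiser equalities; that structural reduction to a single Bernoulli per advertiser is the missing idea in your plan.
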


We note that this approximation ratio is best possible using the considered LP relaxation due to an integrality gap example from
\cite{chakra:siam10}. The problem considered in \cite{chakra:siam10} is an uncapacitated version of the AdCell problem, that is
there is no capacity constraint (C) on the customers. Capacity constraint restricts how many queries/advertisements can be assigned to each customer.
We can represent all the queries associated with each customer as a set; these sets are therefore disjoint and has
integer hard capacities associated with them. Our approximation ratio matches the best known bound from
 \cite{chakra:siam10,Srinivasan:2008} for the uncapacitated case.
 In this section, we give a
 high-level description of the algorithm. We present the detailed description and proof in the next section. Our algorithm is based on applying the rounding technique of \cite{saha:ics10} through several iterations.
 The essential idea of the proposed rounding is to apply a procedure called {\bf Rand-move} to the variables of a
  suitably chosen subset of constraints from the original linear program. These sub-system must be underdetermined to ensure that
  the rounding proceeds without violating any constraint and at least one variable becomes integral.
  The trick lies on choosing a proper sub-system at each step of rounding, which again depends on a detailed case analysis
  of the LP structure.

  Let $y^{*}$ denote the LP optimal solution. We begin by simplifying the assignment given by $y^{*}$.
  Consider a bipartite graph $G(\mathcal{B},\mathcal{I}, E^{*})$ with advertisers $\mathcal{B}$ on one side, queries
  $\mathcal{I}$ on the other
  side and add an edge $(i,j)$  between a advertiser $i$ and query $j$, if $y^{*}_{i,j} \in (0,1)$. That is,
  define $E^{*}=\{(i,j) | \, \, 1 > y^{*}_{i,j} > 0\}$. Our first claim is that $y^{*}$ can be modified
  without affecting the optimal fractional value and the constraints such that $G(\mathcal{B},\mathcal{I}, E^{*})$ is a forest.
  The proof follows from  Claim 2.1  of \cite{chakra:siam10}; we additionally show that such assumption of
  forest structure maintains the capacity constraints.
  \begin{lemma}
  \label{lemma:structure}
  Bipartite graph $G=(\mathcal{B},\mathcal{I}, E^{*})$  induced by the edges $E^*$ can be converted to a forest
  maintaining the optimal objective function value.
  \end{lemma}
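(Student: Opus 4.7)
The plan is to lift the cycle-canceling argument underlying Claim 2.1 of \cite{chakra:siam10} to our setting and then observe that the same perturbation is automatically compatible with our added capacity constraints $(C)$.

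Suppose $G=(\mathcal{B},\mathcal{I},E^*)$ contains a cycle. Since $G$ is bipartite, the cycle has even length $2\ell$ and alternates between advertisers and queries, say $j_1, i_1, j_2, i_2, \ldots, j_\ell, i_\ell, j_1$ (with cyclic indexing). I would move from $y^*$ to $y^*+\epsilon\delta$ along a vector $\delta$ supported on the cycle edges and chosen so that (i) at every query vertex $j_t$ the two incident cycle coordinates cancel, $\delta_{(i_{t-1},j_t)}+\delta_{(i_t,j_t)}=0$, preserving the feasibility constraint $(F)$; and (ii) at every advertiser vertex $i_t$, $u_{i_tj_t}\,\delta_{(i_t,j_t)}+u_{i_tj_{t+1}}\,\delta_{(i_t,j_{t+1})}=0$, which preserves both the budget constraint $(B)$ and the contribution of advertiser $i_t$ to the LP objective $\sum_{ij}u_{ij}x_{ij}$. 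Propagating ratios around the cycle, \cite{chakra:siam10} establish that a nonzero $\delta$ satisfying (i) and (ii) always exists: the optimality of $y^*$ rules out the degenerate case in which the ratio product around the cycle forces $\delta=0$, since otherwise a one-sided perturbation along the cycle would strictly improve the objective. I would then pick $\epsilon$ maximal so that $y^*+\epsilon\delta\in[0,1]^{|E^*|}$, which drives at least one cycle coordinate to $0$ or $1$, strictly reducing $|E^*|$; iterating eventually leaves a forest.

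The only new ingredient beyond \cite{chakra:siam10} is the preservation of $(C)$, i.e., $\sum_{j\in S_k}\sum_i x_{ij}\le c_k$. This is automatic from condition (i): at every query vertex on the cycle the total change $\sum_i\delta_{(i,j)}$ vanishes by pairwise cancellation, and queries off the cycle are untouched. Hence for every customer $k$, $\sum_{j\in S_k}\sum_i\delta_{ij}=0$, so the left-hand side of $(C)$ is unchanged and any slack is preserved verbatim. The \emph{main obstacle}---establishing the existence of a nontrivial cycle direction $\delta$ respecting (i) and (ii)---is already resolved by \cite{chakra:siam10}; what this lemma really adds is the observation that their perturbation is oblivious to how the queries are partitioned into the customer sets $S_k$, so $(C)$ comes along for free.
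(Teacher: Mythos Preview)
There is a genuine gap in your cycle-canceling step. A nonzero $\delta$ satisfying both (i) and (ii) at \emph{every} vertex exists only when the product of bid ratios around the cycle equals $1$; your attempt to exclude the other case via optimality fails precisely when the budget constraint $(B)$ is tight at the closing advertiser. In that situation the cycle direction that would raise the objective also raises the left-hand side of $(B)$ there, so the improving move is infeasible and no contradiction with optimality of $y^*$ follows. A concrete witness: two advertisers $i_1,i_2$, two queries $j_1,j_2$, bids $u_{i_1j_1}=u_{i_2j_2}=2$ and $u_{i_1j_2}=u_{i_2j_1}=1$, budgets $b_{i_1}=b_{i_2}=\tfrac32$. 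The all-$\tfrac12$ assignment is optimal (objective $3$, equal to the sum of budgets), its fractional support is the full $4$-cycle, both budgets are tight, and the ratio product is $4\ne 1$; hence no nonzero $\delta$ satisfies (i) and (ii) simultaneously, yet optimality is not contradicted.

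The argument the paper inherits from \cite{chakra:siam10} does \emph{not} insist on (i) and (ii) everywhere. It propagates (i) at the queries and (ii) at the advertisers around the cycle, and at the closing edge compares the two contributions at $i_1$; in the direction where the net change at $i_1$ is nonnegative it \emph{scales down} the perturbation on the closing edge $(j_\ell,i_1)$ so that the contribution of $i_1$ --- and hence the objective and $(B)$ --- is preserved exactly. The price is that (i) now fails at the single query $j_\ell$: its total fractional assignment strictly drops, which is permissible because $(F)$ is an inequality. Consequently your capacity argument also needs a small repair: the left-hand side of $(C)$ for the customer containing $j_\ell$ is not held constant but only decreases, which still maintains $(C)$ since it too is an inequality. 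That weaker monotonicity, not the exact preservation you claim, is precisely the ``additional'' observation the paper adds on top of \cite{chakra:siam10}.
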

  \begin{proof}

Consider the graph $G=(\mathcal{B},\mathcal{I}, E^{*})$ and consider one connected component of it. We will
  argue for each component separately and similarly.

\emph{  Cycle Breaking}:
   Suppose there is a cycle in the chosen component. Since $G$ is bipartite, the cycle has even length.
   Let the cycle be $C=\langle i_1, j_1,i_2,j_2,\ldots,i_{l},j_{l},i_{1}\rangle$, that is consider the cycle
   to start from a advertiser node. Consider a strictly positive value $\alpha$ and consider the following update
   of the $y^{*}$ values over the edges in the cycle $C$. We add $z_{a,b}$ to edge $(a,b)$, where
\begin{enumerate}
\item[R1.] $z_{i_1,j_1}=-\beta$
\item[R2.] If we are at an query node $j_{t}$, $t \in [1,l]$, then $z_{j_t,i_{t+1}}=-z_{i_{t},j_{t}}$
\item[R3.] If we are at a advertiser node $i_{t}$, $t \in [1,l]$, then $z_{i_{t},j_{t}}=-\frac{b_{i_{t},j_{t-1}}z_{j_{t-1},i_{t}}}{b_{i_{t},j_{t}}}$
\end{enumerate}

 $\beta$ is chosen such that after the update, all the variables lie in $[0,1]$ and at least one variable
 gets rounded to $0$ or $1$, thus the cycle is broken. Note that the entire update is a function of $z_{i_{1},j_{1}}$.
 For any query node, its total contribution in (Assign) constraint of LP1 remains unchanged. For any advertiser node, except $i_1$, its
 contribution in (Advertiser) constraint and thus in the objective function remains the same. In addition,
 since the assign constraints remain unaffected, all the capacity constraints are satisfied.
  For advertiser $i_1$, its contribution
 decreases by $z_{i_1,j_1}b_{i_{1},j_{1}}$ and increases by $z_{j_{l},i_{1}}b_{i_{1},j_{l}}=z_{i_1,j_1}b_{i_{1},j_{l}}\frac{b_{i_{2},j_{1}}b_{i_{3},j_{2}}\ldots b_{i_{l-1},j_{l-2}}}{b_{i_{2},j_{2}}b_{i_{3},j_{3}}\ldots b_{i_{l-1},j_{l-1}}}$.\\
 If $b_{i_{1},j_{1}} \leq b_{i_{1},j_{l}}\frac{b_{i_{2},j_{1}}b_{i_{3},j_{2}}\ldots b_{i_{l-1},j_{l-2}}}{b_{i_{2},j_{2}}b_{i_{3},j_{3}}\ldots b_{i_{l-1},j_{l-1}}}$, then instead of adding $z_{j_l,i_{1}}$ on the last edge, we add some $c < z_{j_l,i_{1}}$ such that
 $z_{i_1,j_1}b_{i_{1},j_{1}}=c b_{i_{1},j_{l}}$. Thus, we are able to maintain the objective function exactly.
 The assign constraint on the last query $j_{l}$ can only decrease by this change
 and hence all the capacity constraints are maintained as well.

 Otherwise, $b_{i_{1},j_{1}} > b_{i_{1},j_{l}}\frac{b_{i_{2},j_{1}}b_{i_{3},j_{2}}\ldots b_{i_{l-1},j_{l-2}}}{b_{i_{2},j_{2}}b_{i_{3},j_{3}}\ldots b_{i_{l-1},j_{l-1}}}$. In that case, we traverse the cycle in the reverse order, that is, we start by decreasing on $z_{i_{1},j_{l}}$ first and
 proceed similarly.

\end{proof}

  Once, we have such a forest structure, several cases arise and depending on the cases, we define a suitable sub-system on
  which to apply the rounding technique. There are three major cases.

  (i) There is a tree with two leaf advertiser nodes: in that case, we show that applying our rounding technique only
  diminishes the objective function by little and all constraints are maintained.

 (ii) No tree contains two leaf advertisers, but there is a tree that contains one leaf advertiser: we start with a
 leaf advertiser and construct a path spanning several trees such that we either end up with a combined path with
 advertisers on both side or a query node in one side such that the capacity constraint on the set containing that
 query is not met with equality (non-tight constraint). This is the most nontrivial case and a detailed discussion is given in the next section.

 (iii) No tree contains any leaf advertiser nodes: in that case we again form a combined path spanning several trees
 such that the queries on two ends of the combined path come from sets with non-tight capacity constraints.

\section{The Detailed Description and Proofs of the Offline Algorithm}
\label{appendix:offline}

\subsection{Generic Rounding Scheme }
\label{appendix:rounding}
\paragraph*{ Rounding Scheme of \cite{saha:ics10}}
 Suppose we are given a set of linear constraints $Ax \leq b$, where $A$ is a $m \times n$ real matrix,
 $x \in [0,1]^n$ and $b \in \mathbb{R}^{m}$. We are also given an optimal fractional solution $x \in [0,1]^n$ that
 optimizes a particular objective function say, ``$\max c^T x$'', $c \in \mathbb{R}^n$.
 Our goal is to round the variables in $x$ to $\{0,1\}^n$
 such that the value of the objective function
 remains close to the initial fractional optimal and the constraints
 $Ax \leq b$ are maintained to the extent possible.

 Project $x$ to only those components $x'$ with values in $(0,1)$. Suppose $x' \in (0,1)^{n}$. The components, $x \setminus x'$, which are already rounded have their values fixed. Denote the reduced system by $A'x' \leq b'$, where $A'$ is now a $m \times n'$ real matrix, $x' \in [0,1]^{n'}$ and $b' \in \mathbb{R}^m$. Consider only the {\em tightly} satisfied linearly independent constraints from the system $A'x' \leq b'$. That is, these constraints are satisfied with equality and are linearly independent. Suppose, these subset of constraints are $\hat{A}x'=\hat{b}$.
 We compute a $r \in \mathbb{R}^n$, $r \neq {0}^n$, such that $A'r = 0$, if such a $r$ exists. We know that if the system $\hat{A}x'=\hat{b}$ is underdetermined, that is, have more variables than equations, then the nullspace of $A$ is non-empty and thus computing a nontrivial $r$ is easy. Once, such a $r$ is computed, we consider the following two possible updates:

 \smallskip

 {\bf Rand-Move}:

 \noindent
 $
 {\bf \text{\it Update } \hat{x_{new}}=\hat{x}+\alpha r \text{ \it with probability } \frac{\beta}{\alpha+\beta}
 \text{ \it and; }}\\ {\bf
 \hat{x_{new}}=\hat{x}-\beta r \text{ \it with probability } \frac{\beta}{\alpha+\beta}.}
 $

 Here $\alpha$ and $\beta$ are two nonzero reals such that $\hat{x_{new}} \in [0,1]^{n'}$.
 At least one component after update gets rounded to $0$ or $1$, or
 one more constraint from $A'\setminus{\hat{A}}$ becomes tight. It is easy to verify that such  $\alpha$ and
 $\beta$ always exist. Note that $\expect{\hat{x_{new}}}=\hat{x}$ ({\bf PI}).

 If the system $A'r=0$ does not have any nontrivial solution, then we choose suitable constraints to drop from
 $A'$ and make the system underdetermined.

 The process continues until all the variables are rounded and is guaranteed to terminate in polynomial time.

\subsection{Rounding Algorithm}

  Let $y^{*}$ denote the LP optimal solution. We begin by simplifying the assignment given by $y^{*}$.
  Consider a bipartite graph $G(\mathcal{B},\mathcal{I}, E^{*})$ with advertisers $\mathcal{B}$ on one side, queries
  $\mathcal{I}$ on the other
  side and add an edge $(i,j)$  between a advertiser $i$ and query $j$, if $y^{*}_{i,j} \in (0,1)$. That is,
  define $E^{*}=\{(i,j) | \, \, 1 > y^{*}_{i,j} > 0\}$.
  By Lemma~\ref{lemma:structure}, we know that
  $y^{*}$ can be modified without affecting the value of $\LPOpt$ such that $G(\mathcal{B},\mathcal{I}, E^{*})$ is a forest.



 We now have a collection of trees. There can arise several cases at this stage. For each of these cases,
 we identify a set of linear constraints and apply our {\bf Rand-Move} step on the variables in the chosen
 system of linear constraints. We now specify each of these cases and the system of linear constraints associated
 with that case. For {\bf Rand-Move} to be applicable, we show that our chosen linear system is underdetermined.
 For the correctness proof, we show that all the assign and capacity constraints are maintained. Some advertiser constraints
 may get violated, but in the objective an advertiser $i$ can pay at most $B_{i}$. We show indeed the loss in the
 objective is at most $\frac{1}{4}$ of the optimal objective value. Thus, we obtain a $\frac{3}{4}$-approximation.

 Let $y$ denote the LP solution at this stage. There are three main cases to consider:

 \smallskip

 {\bf Case (i).} There is a tree with two leaf advertiser nodes.

 {\bf Case (ii).} No tree contains two leaf advertisers, but there is a tree that contains one leaf advertiser.

 {\bf Case (iii).} No tree contains any leaf advertiser nodes.


\paragraph*{{\bf Case (i).} There is a tree with two leaf advertiser nodes.}

 Consider the unique path $P$ connecting the two leaf advertisers say $i_0$ and $i_l$.
 Suppose $P=\langle i_0,j_1,i_1,j_2,i_2,\ldots,j_l,i_l$. Define a $x$ variable for
 each edge in the path $P$ that takes values in $[0,1]$. Consider the following system of linear constraints,
  \begin{align}
& x_{i_{t-1},j_{t}}+x_{i_{t},j_{t}}=y_{i_{t-1},j_{t}}+y_{i_{t},j_{t}} & \forall t \in [1,l] \label{eq:case1-1}\\
& x_{i_{t},j_{t}}b_{i_{t},j_{t}}+x_{i_{t},j_{t+1}}b_{i_{t},j_{t+1}}= & \notag \\
& y_{i_{t},j_{t}}b_{i_{t},j_{t}}+y_{i_{t},j_{t+1}}b_{i_{t},j_{t+1}} & \forall t \in [1,l-1]  \label{eq:case1-2}\\
& x \in [0,1]^{2l} \label{eq:case1-3}
\end{align}

 We apply {\bf Rand-Move} on the above linear system.

 \begin{lemma}
 \label{lemma:case1}
 The linear system defined by Equations \ref{eq:case1-1} and \ref{eq:case1-2} is underdetermined, Assign constraints for all queries, Capacity
 constraints for all sets and Bidder constraints for all advertisers except the two leaf advertisers are maintained.
 \end{lemma}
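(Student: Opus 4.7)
The plan is to first count equations and unknowns to establish underdetermination, and then separately verify the assign, capacity, and internal-bidder constraints, relying throughout on the fact that edges not lying on the path $P$ are untouched by \textbf{Rand-Move} -- a consequence of the forest structure guaranteed by Lemma~\ref{lemma:structure}.

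For underdetermination, the path $P=\langle i_0, j_1, i_1, \ldots, j_l, i_l\rangle$ contains exactly $2l$ edges, so \textbf{Rand-Move} acts on $2l$ variables. Equations~\eqref{eq:case1-1} supply one equality per query on the path, giving $l$ equations, and~\eqref{eq:case1-2} supply one equality per internal advertiser, giving $l-1$ equations. With $2l-1$ equations in $2l$ unknowns the nullspace of the constraint matrix is nontrivial, so a valid direction $r$ exists and \textbf{Rand-Move} can be applied.

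For the assign constraints, take any query $j$. If $j \notin \{j_1, \ldots, j_l\}$, then no edge incident to $j$ lies in $P$ and $x_{i,j}=y_{i,j}$ for every $i$, so $\sum_i x_{i,j}=\sum_i y_{i,j}$ is preserved exactly. If $j=j_t$, the only path edges at $j_t$ are $(i_{t-1},j_t)$ and $(i_t,j_t)$; every other edge at $j_t$ is either already integral or fractional but off $P$, and hence unchanged. Equation~\eqref{eq:case1-1} then forces $x_{i_{t-1},j_t}+x_{i_t,j_t}=y_{i_{t-1},j_t}+y_{i_t,j_t}$, so $\sum_i x_{i,j_t}=\sum_i y_{i,j_t}$. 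The capacity constraint for any customer $k$ is the sum of these per-query assign sums over $j\in S_k$, each of which is preserved, so $\sum_{j\in S_k}\sum_i x_{i,j}=\sum_{j\in S_k}\sum_i y_{i,j}\le c_k$ automatically.

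For the bidder constraints at an internal advertiser $i_t$ with $t\in[1,l-1]$, the only path edges at $i_t$ are $(i_t,j_t)$ and $(i_t,j_{t+1})$; every other edge at $i_t$ is untouched, so~\eqref{eq:case1-2} yields $\sum_j x_{i_t,j}b_{i_t,j}=\sum_j y_{i_t,j}b_{i_t,j}\le b_{i_t}$. The two leaf advertisers $i_0$ and $i_l$ each have exactly one fractional path edge and no corresponding equation in~\eqref{eq:case1-2}, so their budget constraints may indeed be violated; that slack is what the subsequent objective-value analysis will pay for. The main bookkeeping subtlety is to justify, at each internal vertex of $P$, that every edge of the tree not on $P$ is either already integral or fractional-but-untouched, and this is precisely where the forest structure from Lemma~\ref{lemma:structure} is invoked.
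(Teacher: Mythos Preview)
Your proof is correct and follows essentially the same approach as the paper: count $2l$ variables against $l+(l-1)=2l-1$ equations to establish underdetermination, then observe that \eqref{eq:case1-1} preserves the assign (and hence capacity) constraints while \eqref{eq:case1-2} preserves the budget constraints for all but the two leaf advertisers. Your version is more detailed; the one small quibble is that the last sentence over-attributes the ``off-path edges are untouched'' fact to the forest structure of Lemma~\ref{lemma:structure}, when really it is immediate from the definition of \textbf{Rand-Move} acting only on the variables of the chosen subsystem.
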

 \begin{proof}
 The number of constraints of type \ref{eq:case1-1} is $l$ and the number of constraints of type \ref{eq:case1-2} is
 $l-1$. However the number of variables is $2l$. Constraint \ref{eq:case1-1} ensures all the assign constraints and hence all the
 capacity constraints are maintained. Constraint \ref{eq:case1-2} ensures all the advertisers maintain their budget except probably the two
 leaf advertisers.
\end{proof}

  \paragraph*{{\bf Case (ii).} No tree contains two leaf advertisers, but there is a tree that contains one leaf advertiser.}

  There are several subcases under it. We first consider four simple subcases.

 \medskip \noindent \textbf{Subcase (1):} {\it There is a maximal path between two queries, where the two queries
 belong to the same set and the set-capacity constraint is non-tight}.

 Since the path is maximal, the queries at the start and the end of the path
 are leaf queries and therefore have non-tight assign constraints. Non-tight naturally implies
 the fact that a constraint is not satisfied by equality.
 Suppose the maximal path is $P=\langle j_1,i_1,\ldots,i_{l-1},j_{l} \rangle$ and let the
 value of the edge-variables associated with this path be
 $\langle y_{i_1,j_1},y_{i_1,j_2},y_{i_2,j_2},\ldots, y_{i_{l-1},j_{l-1}},y_{i_{l-1},j_{l}}\rangle$. These $y$ values are treated as constants.
 Define variables $\langle x_{i_1,j_1},x_{i_1,j_2},x_{i_2,j_2},\ldots,$ $x_{i_{l-1},j_{l-1}},x_{i_{l-1},j_{l}}\rangle$ associated with these edges of $P$. Let $S$ be the set containing the queries $j_1$ and $j_l$. Let the capacity
 of $S$ be $c$. In the current solution, considering the rounded variables as well,
 let the total allocation of queries from the set $S$ be
 be $s+y_{i_{1},j_{1}}+y_{i_{l-1},j_{l}}$. That is, $s$ is the sum of values of the queries in $S$ other than $j_1$ and $j_{l}$.
  Consider the following system of linear constraints:

  \begin{align}
& x_{i_1,j_1} \leq 1, x_{i_{l-1},j_{l}} \leq 1 &  \label{subcase-1:1}\\
& x_{i_{t-1},j_{t}}+x_{i_{t},j_{t}}=y_{i_{t-1},j_{t}}+y_{i_{t},j_{t}} & \forall t \in [2,l-1] \label{subcase-1:2}\\
& x_{i_{t},j_{t}}b_{i_{t},j_{t}}+x_{i_{t},j_{t+1}}b_{i_{t},j_{t+1}} = & \notag \\
& y_{i_{t},j_{t}}b_{i_{t},j_{t}}+y_{i_{t},j_{t+1}}b_{i_{t},j_{t+1}} & \forall t \in [1,l-1] \label{subcase-1:3} \\
& x_{i_1,j_1}+ x_{i_{l-1},j_{l}} \leq s-c \label{subcase-1:4}\\
& x \in [0,1]^{l+1}
\end{align}

We apply {\bf Rand-Move} on the above linear system.

\begin{lemma}
\label{lemma:subcase1}
The linear system defined for Subcase 1 under Case (ii) is underdetermined and {\bf Rand-Move} on it maintains all
the constraints, Assign, Bidder, Capacity, of LP-1.
\end{lemma}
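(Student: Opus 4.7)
The plan splits naturally into two parts: first, verify that the equality portion of the constructed system has a non-trivial kernel, so that \textbf{Rand-Move} admits a valid direction $r$; and second, verify that the side constraints \ref{subcase-1:1}, \ref{subcase-1:4}, together with the box $x\in[0,1]^{2l-2}$ and the equalities, are enough to preserve every Assign, Bidder, and Capacity constraint of LP-1 along the update.

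For underdeterminedness, I would count dimensions explicitly. The maximal path $P=\langle j_1, i_1, j_2, i_2,\ldots, i_{l-1}, j_l\rangle$ alternates between $l$ query nodes and $l-1$ bidder nodes, so it carries $2l-2$ edges, giving $2l-2$ unknowns. Equation \ref{subcase-1:2} contributes one equality per internal query $j_t$ with $t\in[2,l-1]$, i.e.\ $l-2$ equations, and equation \ref{subcase-1:3} contributes one per internal bidder $i_t$ with $t\in[1,l-1]$, i.e.\ $l-1$ equations. The total $2l-3$ is strictly smaller than $2l-2$, so the equality matrix has a non-trivial kernel and \textbf{Rand-Move} proceeds.

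For constraint preservation I would go family by family. The Assign constraint (F) at each internal query $j_t$ is safe because \ref{subcase-1:2} fixes the pair sum $x_{i_{t-1},j_t}+x_{i_t,j_t}$, while all off-path fractional edges and integer edges at $j_t$ are untouched. At the endpoint queries $j_1, j_l$, maximality of $P$ forces them to be leaves of the forest, so the path edge is their only fractional edge; moreover no integer edge at these leaves can have value $1$ (or the fractional edge would be forced to $0$ by (F), contradicting membership in $E^{*}$), and therefore the box bound in \ref{subcase-1:1} coincides with the Assign constraint itself. The Bidder constraint (B) at each internal bidder $i_t$ is preserved by \ref{subcase-1:3}, which fixes the budget-weighted pair sum on its two path edges, while its off-path edges and integer contributions are unchanged; bidders not on $P$ see no change at all.

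The Capacity constraint (C) is the point of this subcase, and is the main thing to get right. For the distinguished set $S$ containing $j_1$ and $j_l$, any internal path query $j_t$ that also lies in $S$ contributes zero net change to the $S$-allocation by \ref{subcase-1:2}, and any query of $S$ lying off $P$ is untouched; thus the post-update $S$-allocation equals exactly $s+x_{i_1,j_1}+x_{i_{l-1},j_l}$, and inequality \ref{subcase-1:4} (with the right-hand side read as $c-s$, correcting the apparent sign swap in the displayed statement) is precisely the bound required to keep this at most $c$, which is possible because $S$ was assumed non-tight. For any other capacity set $S'\neq S$, only internal path queries can belong to $S'\cap P$, and for each such $j_t$ equation \ref{subcase-1:2} forces the change to be zero. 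The only delicate step to write up in detail is the endpoint Assign argument together with the observation that \ref{subcase-1:2} renders internal queries invisible to every Capacity set, so \ref{subcase-1:4} is the lone additional side constraint needed.
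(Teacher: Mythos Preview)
Your proposal is correct and follows essentially the same approach as the paper's own proof: both count $2l-2$ variables against $2l-3$ tight equalities (from \ref{subcase-1:2} and \ref{subcase-1:3}) to get underdeterminedness, and both attribute preservation of Assign to \ref{subcase-1:2}/\ref{subcase-1:1}, of Bidder to \ref{subcase-1:3}, and of Capacity to \ref{subcase-1:2}/\ref{subcase-1:4}. Your write-up is in fact more careful than the paper's in two places --- you spell out why the leaf endpoints $j_1,j_l$ have no integer edge of value $1$ (so that \ref{subcase-1:1} really is the Assign bound), and you catch the $s-c$ versus $c-s$ sign slip in \ref{subcase-1:4} --- but the underlying argument is identical.
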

\begin{proof}
  Note that, Constraint (\ref{subcase-1:4}) is non-tight. In addition, Constraint (\ref{subcase-1:1}) implies
  that the leaf queries have non-tight assignment constraint. Now, the number of variables associated with
  the above linear-system is $2(l-1)=2l-2$ and the number of tightly satisfied
linearly independent constraints are $2l-3$. Hence, we can employ {\bf Rand-Move}.

Constraint (\ref{subcase-1:2}) implies the assignment constraint of the non-leaf queries are maintained. Constraint (\ref{subcase-1:3})
implies the budget constraint of the non-leaf advertisers, and therefore all the advertisers considered by this system, are maintained.
The capacities of all the sets in which non-leaf queries
participates are automatically maintained. In addition, Constraint (\ref{subcase-1:4}) implies the capacity constraint of the
set involving the leaf queries are maintained as well.
\end{proof}

\medskip \noindent \textbf{Subcase (2):} {\it There is a maximal path between two queries, where the two queries
 belong to two different sets and both set-capacity constraints are {\em non-tight}}.

 This is almost similar to Case (ii).
 Since the path is maximal, the queries at the start and the end of the path
 are leaf queries and therefore have non-tight assign constraints.
 Suppose the maximal path is $P=\langle j_1,i_1,j_2,i_2,\ldots,j_{l-1},i_{l-1},j_{l} \rangle$ and let the
 value of the edge-variables associated with this path be
 $\langle y_{i_1,j_1},y_{i_1,j_2},y_{i_2,j_2},\ldots,$ \\ $y_{i_{l-1},j_{l-1}},y_{i_{l-1},j_{l}}\rangle$. We
 treat these values as constants here. Define variables $\langle x_{i_1,j_1},x_{i_1,j_2},x_{i_2,j_2},\ldots, x_{i_{l-1},j_{l-1}},x_{i_{l-1},j_{l}}\rangle$ associated with these edges of $P$. The set constraint
 involving the query $j_1$  is non-tight and suppose the total sum
 of the values of the queries (rounded and not rounded) belonging to that set is $s+y_{i_{1},j_{1}}$, while its capacity is
 $c$. Similarly, the set constraint
 involving the query $j_l$  is non-tight and suppose the total sum
 of the values of the queries (rounded and not rounded) belonging to that set is $s'+y_{i_{l-1},j_{l}}$, while its capacity is
 $c'$.  Consider the following linear system.

  \begin{align}
& x_{i_1,j_1} \leq 1, x_{i_{l-1},j_{l}}\leq 1 &  \label{subcase-2:1}
\\
& x_{i_{t-1},j_{t}}+x_{i_{t},j_{t}}=y_{i_{t-1},j_{t}}+y_{i_{t},j_{t}} & \forall t \in [2,l-1] \label{subcase-2:2}\\
& x_{i_{t},j_{t}}b_{i_{t},j_{t}}+x_{i_{t},j_{t+1}}b_{i_{t},j_{t+1}} = & \notag \\
& y_{i_{t},j_{t}}b_{i_{t},j_{t}}+y_{i_{t},j_{t+1}}b_{i_{t},j_{t+1}} & \forall t \in [1,l-1] \label{subcase-2:3} \\
& x_{i_1,j_1} \leq c-s \label{subcase-2:4}\\
& x_{i_{l-1},j_l} \leq c'-s' \label{subcase-2:5}\\
& x \in [0,1]^{l+1}
\end{align}

Note that changes in the linear system from Subcase 1. We apply {\bf Rand-Move} on the above linear system.

\begin{lemma}
\label{lemma:subcase2}
The linear system defined for Subcase 2 under Case (ii) is underdetermined and {\bf Rand-Move} on it maintains all
the constraints, Assign, Bidder, Capacity, of LP-1.
\end{lemma}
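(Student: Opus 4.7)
The plan is to mirror the counting/correctness argument from Lemma~\ref{lemma:subcase1}, adjusting only for the fact that the two leaf queries lie in two distinct sets, each having its own non-tight capacity constraint. The structural facts needed are already built into the choice of the sub-system, so the proof will be short once the bookkeeping is done.

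First, I would count variables against tight linearly independent constraints. The path $P = \langle j_1, i_1, j_2, i_2, \ldots, j_{l-1}, i_{l-1}, j_l\rangle$ has $l$ query nodes and $l-1$ advertiser nodes, so it contributes $2(l-1)=2l-2$ edge variables. The equality block (\ref{subcase-2:2}) yields $l-2$ constraints (one per non-leaf query $j_2,\ldots,j_{l-1}$), and the equality block (\ref{subcase-2:3}) yields $l-1$ constraints (one per internal advertiser $i_1,\ldots,i_{l-1}$), for a total of $2l-3$ equalities. The remaining rows (\ref{subcase-2:1}), (\ref{subcase-2:4}), (\ref{subcase-2:5}) are strict inequalities at the current fractional point: (\ref{subcase-2:1}) because leaf-edge values belong to $(0,1)$ by construction of $E^*$, and (\ref{subcase-2:4})--(\ref{subcase-2:5}) precisely because both set-capacity constraints are non-tight by the subcase hypothesis. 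Hence the tight row count is at most $2l-3 < 2l-2$, the null space of the equality sub-system is non-trivial, and \textbf{Rand-Move} is applicable.

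Next I would check each LP-1 constraint in turn. Assign constraints for the interior queries $j_2,\ldots,j_{l-1}$ are preserved exactly by (\ref{subcase-2:2}); assign constraints for the leaf queries $j_1$ and $j_l$ are preserved by (\ref{subcase-2:1}), noting that each leaf query has only one edge in the sub-system. Because each interior query's \emph{total} assignment is unchanged, the capacity constraints of every set whose queries lie strictly inside the path are preserved automatically. The two boundary sets are handled separately: (\ref{subcase-2:4}) ensures the set containing $j_1$ stays within capacity $c$ (the rounded-plus-fractional mass outside $j_1$ equals $s$, so we only need $x_{i_1,j_1} \le c - s$), and (\ref{subcase-2:5}) does the same for the set containing $j_l$. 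Bidder/budget constraints for all internal advertisers $i_1,\ldots,i_{l-1}$ follow from (\ref{subcase-2:3}), and advertisers not on this path are untouched. By property \textbf{PI} of \textbf{Rand-Move}, expected payments are preserved as well.

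The main obstacle, identical to Subcase 1, is ensuring that (\ref{subcase-2:1}), (\ref{subcase-2:4}), and (\ref{subcase-2:5}) are \emph{strictly} slack at the moment \textbf{Rand-Move} is invoked, so they contribute no tight rows to the sub-system. Slackness of (\ref{subcase-2:1}) is guaranteed by membership of the leaf-edge values in the interior $(0,1)$, which is maintained by the forest/edge-set definition used in Lemma~\ref{lemma:structure}. Slackness of (\ref{subcase-2:4}) and (\ref{subcase-2:5}) must be carried as an invariant by the rounding loop: the subcase is only entered when both set-capacity constraints are still non-tight, which is exactly how Case~(ii) is dispatched. Given these invariants, the underdetermination follows from the counting above and all LP-1 constraints are maintained, completing the proof.
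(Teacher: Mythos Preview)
Your proof is correct and follows essentially the same approach as the paper: count $2l-2$ variables against $2l-3$ tight equalities (from (\ref{subcase-2:2}) and (\ref{subcase-2:3})), observe that (\ref{subcase-2:1}), (\ref{subcase-2:4}), (\ref{subcase-2:5}) are non-tight by the subcase hypothesis, and then verify Assign, Bidder, and Capacity constraints exactly as you do. Your write-up is more explicit than the paper's (which omits the breakdown of $2l-3$ into $(l-2)+(l-1)$ and does not spell out the leaf-query Assign constraint via (\ref{subcase-2:1})), but the argument is the same.
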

\begin{proof}
The constraints (\ref{subcase-2:4}) and (\ref{subcase-2:5}) are non-tight and so are \ref{subcase-2:1}.
The number of variables associated with the above linear-system
is $2(l-1)=2l-2$ and the number of tightly satisfied
linearly independent constraints are $2l-3$. Hence, we employ {\bf Rand-Move}.

 Constraint (\ref{subcase-2:2}) implies the assignment constraint of the non-leaf queries are maintained. Constraint (\ref{subcase-2:3})
implies the budget constraint of the non-leaf advertisers, and therefore all the advertisers considered by this system, are maintained.
The constraints (\ref{subcase-2:4}), (\ref{subcase-2:5}) ensure that we won't violate the capacity constraint of the sets involving the
leaf queries $j_1$ and $j_{l}$ respectively.
\end{proof}

  \medskip \noindent \textbf{Subcase (3):} {\it There is a path (not necessarily maximal path) between two queries, where the two queries
 belong to the same set, the set-capacity constraint is tight but both the queries have non-tight assignment constraints}.

Suppose the path is $P=\langle j_1,i_1,j_2,i_2,\ldots,j_{l-1},i_{l-1},j_{l} \rangle$ and let the
 value of the edge-variables associated with this path be
 $\langle y_{i_1,j_1},$ $y_{i_1,j_2},y_{i_2,j_2},\ldots, y_{i_{l-1},j_{l-1}},y_{i_{l-1},j_{l}}\rangle$. We
 treat these values as constants here. Define variables $\langle x_{i_1,j_1},x_{i_1,j_2},x_{i_2,j_2},\ldots,$ \\ $x_{i_{l-1},j_{l-1}},x_{i_{l-1},j_{l}}\rangle$ associated with these edges of $P$. Let the total fractional assignment
 of query $j_1$ be $a_1+y_{i_{1},j_{1}} < 1$ and the total fractional assignment
 of query $j_l$ be $a_2+y_{i_{l-1},j_{l}} < 1$.  Here we will apply the {\bf Cycle Breaking} trick. We consider
 updates $\langle z_{i_1,j_1},z_{i_1,j_2},z_{i_2,j_2},\ldots, z_{i_{l-1},j_{l-1}},z_{i_{l-1},j_{l}}\rangle$ such that

 \begin{enumerate}
\item[R1.] $z_{j_1,i_1}=-\beta$
\item[R2.] If we are at an query node $j_{t}$, $t \in [1,l]$, then \\ $z_{j_t,i_{t+1}}=-z_{i_{t},j_{t}}$
\item[R3.] If we are at a advertiser node $i_{t}$, $t \in [1,l]$, then \\ $z_{i_{t},j_{t}}=-\frac{b_{i_{t},j_{t-1}}z_{j_{t-1},i_{t}}}{b_{i_{t},j_{t}}}$
\end{enumerate}

The value of $\beta >0$ is so chosen  that ensures
 all the edge-variables remain in $[0,1]$, $x_{i_{l-1},j_{l}} \leq 1-a_2$, $x_{i_{1},j_1} \leq 1-a_1$.
 The entire update is a function of $z_{j_1,i_1}$. If $z_{j_1,i_1} \geq z_{j_l,i_{l-1}}$, then we apply the above update.
 Else, we consider the updates in the reverse direction, starting from the edge $(j_l,i_{l-1})$.

\begin{lemma}
\label{lemma:subcase3}
The update vector ${\bf z}$ is nontrivial and the update maintains all
the constraints, Assign, Bidder, Capacity, of LP-1.
\end{lemma}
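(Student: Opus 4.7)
The plan is to first argue that $\mathbf z$ can be taken nonzero, then verify the Assign, Bidder, and Capacity constraints in turn. For nontriviality, every edge of $P$ lies in $E^*$ and hence carries a value strictly in $(0,1)$, and the endpoint assign constraints have strict slack ($a_1+y_{i_1,j_1}<1$ and $a_2+y_{i_{l-1},j_l}<1$). Rules R2 and R3 each multiply by a nonzero factor (namely $-1$ and $-b_{i_t,j_{t-1}}/b_{i_t,j_t}$, respectively), so every coordinate of $\mathbf z$ is a fixed nonzero multiple of $\beta$. Taking $\beta>0$ sufficiently small keeps every updated edge inside $[0,1]$ and respects the two endpoint slack bounds, so a nontrivial feasible $\mathbf z$ exists.

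For the Assign constraints, at each internal query $j_t$ with $t\in\{2,\dots,l-1\}$, rule R2 forces the two path-edge $z$-values to cancel, so the assign sum at $j_t$ is preserved. At the two endpoint queries $j_1,j_l$ only a single path-edge moves, and the admissible range for $\beta$ built into the algorithm keeps $x_{i_1,j_1}\le 1-a_1$ and $x_{i_{l-1},j_l}\le 1-a_2$, so their assign constraints remain satisfied. For the Bidder constraints, rule R3 is exactly the equality $b_{i_t,j_{t-1}}z_{j_{t-1},i_t}+b_{i_t,j_t}z_{i_t,j_t}=0$, so the bid-weighted change at every advertiser $i_1,\dots,i_{l-1}$ on the path is zero and their budgets are unchanged; advertisers off $P$ have no edges touched.

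The main obstacle is Capacity, precisely because the set $S$ that contains both $j_1$ and $j_l$ is tight. Every internal query of $P$ has unchanged assign sum, so any set involving only internal queries is unaffected; the only constraint at risk is the one for $S$, whose net usage changes by exactly $z_{j_1,i_1}+z_{i_{l-1},j_l}$. A short induction through the R2/R3 propagation shows that starting from $z_{j_1,i_1}=-\beta$ yields $z_{i_{l-1},j_l}=\rho\beta$ with $\rho=\prod_{t=1}^{l-1} b_{i_t,j_t}/b_{i_t,j_{t+1}}>0$, making the net change $(\rho-1)\beta$. If $\rho\le 1$ the forward direction already gives $\le 0$; if $\rho>1$, starting instead from $z_{i_{l-1},j_l}=-\beta'$ flips the roles and produces net change $(1/\rho-1)\beta'\le 0$. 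The algorithm's case split between forward and reverse propagation is exactly designed to pick the direction that makes the endpoint with larger magnitude the one whose assign decreases, so the tight capacity of $S$ is preserved either way, completing the argument.
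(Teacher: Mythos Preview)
Your proof is correct and follows essentially the same approach as the paper's: you verify Bidder via R3, Assign via R2 at internal queries plus the endpoint slack conditions for $j_1,j_l$, and then isolate the single at-risk capacity constraint for the tight set $S$. Your explicit computation of the propagation factor $\rho=\prod_{t}b_{i_t,j_t}/b_{i_t,j_{t+1}}$ and the resulting net change $(\rho-1)\beta$ makes the direction-choice argument cleaner than the paper's phrasing (which compares $z_{j_1,i_1}$ to $z_{j_l,i_{l-1}}$ and is really a comparison of magnitudes), but the content is the same.
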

\begin{proof}
Clearly, all the advertiser nodes maintain their budget due to rule R3. All the query nodes, except $j_1$ and $j_{l}$ maintain
their assign constraint. All the sets that do not contain $j_1$ or $j_{l}$ thus maintain their capacity constraints.
We start the update, by subtracting from the edge $(j_1,i_1)$ if $z_{j_1,i_1} \geq z_{j_l,i_{l-1}}$. Therefore, the set
that contains both $j_1$ and $j_l$ satisfy its capacity reduced. Otherwise, we start subtracting from the edge $(j_l,i_{l-1}$,
and again the set containing $j_1$ and $j_l$ maintains the capacity constraint, since now  $z_{j_1,i_1} < z_{j_l,i_{l-1}}$.

Since, $y_{i_{1},j_{1}} < 1 -a_1$, $y_{i_{l-1},j_{l}}< 1- a_2$ and all the other variables are in $(0,1)$, we can always find
a $\beta > 0$ such that either $x_{i_{1},j_{1}}=1-a_1$ or $x_{i_{l-1},j_{l}}=1-a_2$, or one of them is rounded down to 0, or some other
variable in the path is rounded to $0$ or $1$.
\end{proof}

 \medskip \noindent \textbf{Subcase (4):} {\it There is a maximal path with a advertiser on one side, an query in another
 with the set containing it being non-tight}.

 Since we are considering a maximal path, the two end-points must be leaf nodes. Suppose the maximal path is $P=\langle j_1,i_1,j_2,i_2,$ $\ldots,j_{l-1},i_{l-1} \rangle$ and let the
 value of the edge-variables associated with this path be
 $\langle y_{i_1,j_1},y_{i_1,j_2},y_{i_2,j_2},\ldots, y_{i_{l-1},j_{l-1}}\rangle$. Let the set in which the
 query $j_{1}$ belongs be $S$ and let it have a total assignment from the rounded and yet to be rounded variables equalling
  $s+y_{i_{l-1},j_{l-1}}$. In addition, let its capacity be $c$. Consider the following linear system:

  \begin{align}
& x_{i_1,j_1} \leq 1 &  \label{subcase-4:1}
\\
& x_{i_{t-1},j_{t}}+x_{i_{t},j_{t}}=y_{i_{t-1},j_{t}}+y_{i_{t},j_{t}} & \forall t \in [2,l-1] \label{subcase-4:2}\\
& x_{i_{t},j_{t}}b_{i_{t},j_{t}}+x_{i_{t},j_{t+1}}b_{i_{t},j_{t+1}}= & \notag \\
& y_{i_{t},j_{t}}b_{i_{t},j_{t}}+y_{i_{t},j_{t+1}}b_{i_{t},j_{t+1}} & \forall t \in [1,l-2] \label{subcase-4:3} \\
& x_{i_1,j_1} \leq c-s \label{subcase-4:4}\\
& x \in [0,1]^{l+1}
\end{align}

 We apply {\bf Rand-Move} on the above linear system.

\begin{lemma}
\label{lemma:subcase4}
The linear system defined for Subcase 4 under Case (ii) is underdetermined and {\bf Rand-Move} on it maintains
the constraints, Assign, Capacity, of LP-1 as well as the Bidder constraint except possibly for the one leaf advertiser.
\end{lemma}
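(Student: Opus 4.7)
The plan is to follow the same two-step template used for Subcases (1)--(3): first count variables against tight equalities to extract a nontrivial kernel direction for \textbf{Rand-Move}, and then verify case by case that each equality (or scalar bound) in the subsystem is exactly one of the Assign/Capacity/Budget inequalities of LP-1, so that moving along $\pm r$ preserves it.

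For the underdeterminedness count, the path $P=\langle j_1,i_1,j_2,i_2,\ldots,j_{l-1},i_{l-1}\rangle$ carries $2l-3$ edge variables (namely $x_{i_t,j_t}$ for $t\in[1,l-1]$ and $x_{i_t,j_{t+1}}$ for $t\in[1,l-2]$). The tight equality constraints in the subsystem are (\ref{subcase-4:2}), contributing $l-2$ assignment equalities at the internal queries $j_2,\ldots,j_{l-1}$, and (\ref{subcase-4:3}), contributing $l-2$ budget equalities at the internal advertisers $i_1,\ldots,i_{l-2}$, for a total of $2l-4$. Crucially, the scalar bounds (\ref{subcase-4:1}) and (\ref{subcase-4:4}) are strictly satisfied at the current fractional point: (\ref{subcase-4:1}) because $j_1$ is a leaf query with $y_{i_1,j_1}\in(0,1)$, and (\ref{subcase-4:4}) by the subcase hypothesis that the set $S$ containing $j_1$ is non-tight. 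With $2l-4<2l-3$, the null space of the tight equalities is nonempty, so a nonzero $r$ with $A'r=0$ exists and \textbf{Rand-Move} can be applied.

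For constraint maintenance: (\ref{subcase-4:2}) preserves $x_{i_{t-1},j_t}+x_{i_t,j_t}$ for each internal query $j_t$, which (since all remaining incident variables at $j_t$ are already integral and fixed) exactly preserves its Assign constraint; at the leaf query $j_1$, only $x_{i_1,j_1}$ is in play, and (\ref{subcase-4:1}) together with the previously rounded contributions keeps $\sum_i x_{i,j_1}\le 1$. Capacity of every set whose queries are all internal to this path is preserved automatically from the preserved Assign values, and the Capacity of $S$ is preserved by (\ref{subcase-4:4}) because the remaining slack $c-s$ is exactly what $x_{i_1,j_1}$ is allowed to consume. Equalities (\ref{subcase-4:3}) preserve $x_{i_t,j_t}b_{i_t,j_t}+x_{i_t,j_{t+1}}b_{i_t,j_{t+1}}$ for $t\in[1,l-2]$, so the Budget constraints of $i_1,\ldots,i_{l-2}$ are intact. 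The single constraint for which no corresponding equality appears in the system is the Budget of the leaf advertiser $i_{l-1}$, which is therefore the only one that may be violated --- matching the lemma's exception.

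The main obstacle I anticipate is making sure the tight-constraint count really is $2l-4$ and not more: if either (\ref{subcase-4:1}) or (\ref{subcase-4:4}) happened to be tight, the system would cease to be underdetermined and \textbf{Rand-Move} would have to drop a constraint, potentially losing control over that dropped inequality. The subcase hypotheses (leaf query on a non-tight set, with $y_{i_1,j_1}<1$) are precisely what rules this out, and they should be invoked explicitly in the formal write-up.
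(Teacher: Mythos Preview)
Your proposal is correct and follows essentially the same approach as the paper's proof: the same variable/tight-constraint count of $2l-3$ versus $2l-4$, and the same case-by-case verification that (\ref{subcase-4:2}) handles internal Assign constraints, (\ref{subcase-4:1}) handles the leaf query, (\ref{subcase-4:4}) handles the Capacity of $S$, and (\ref{subcase-4:3}) handles all Bidder constraints except the leaf advertiser $i_{l-1}$. Your explicit justification for why (\ref{subcase-4:1}) and (\ref{subcase-4:4}) are strictly satisfied is a useful addition that the paper leaves implicit.
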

\begin{proof}
The constraints \ref{subcase-4:1} and \ref{subcase-4:4} are non-tight. The number of tightly satisfied linear independent constraints
is therefore at most $(l-2)+(l-2)=2l-4$, whereas the number of variables is $2l-3$. Hence {\bf Rand-Move} can be applied.

Constraint \ref{subcase-4:2} and \ref{subcase-4:1} ensure that all the assign constraints for the queries are maintained.
Constraint \ref{subcase-4:3} ensures the advertiser constraints are maintained for all the advertisers except possibly for $i_{l-1}$.
Constraint \ref{subcase-4:2} and \ref{subcase-4:4} ensure that all the capacity constraints are maintained.
  \end{proof}

 As long as Case (i) or (1-4) subcases of Case (ii) apply, we continue applying them. Also at any time, if we find the linear-system composed of
all the tightly satisfied linearly independent constraints of LP-1 for any tree becomes underdetermined, we apply {\bf Rand-Move}.
When neither subcase (1)-(4) or Case (i) apply, or {\bf Rand-Move} can not be applied to the whole system, we
 have the following properties of the resulting forest structure:

\begin{enumerate}
\item (Case 1 does not apply): No two leaves are advertisers. So there can be at most one leaf that is a advertiser in any tree.
\item (Subcase 3 does not apply): No two queries that are non-tight and belong to the same set with tight capacity are in the same tree. Therefore,
each tree can contain only one non-tight query from a tight set.
\item ({\bf Rand-Move} does not apply to the LP1 constraints for any single tree): The number of tightly satisfied linearly independent constraints from each tree is at least as many as the number of
variables.
\item (Subcase 1 and 2 do not apply): No two leaves that are queries belong to the same set. Also among the leaves that are queries, at most one can belong to a set that has non-tight capacity constraint. In essential,
there can be only one leaf that is an query and that belongs to a set that has non-tight capacity constraint.
\item (Subcase 4 does not apply): If there is a leaf node that is a advertiser in a tree, all other leaf nodes must be queries and must be part of sets
that have tight capacity constraint.
\end{enumerate}

\medskip \noindent \textbf{Subcase (5):} {\it None of subcases (1)-(4) apply}.

 This is the most nontrivial subcase. Denote the tree that contains a leaf advertiser node by $T_1$ and let $i_1$ be the advertiser that is a leaf. Consider a maximal path starting from $i^{1}_1$.
Since Case (i) or Subcases (1-4) do not apply, the other leaf end-point is an query, say $j^{1}_{l_1}$, that belongs to set $S_1$ and set
$S_1$ has tight capacity constraint. Of course, the query $j^{1}_{l_1}$ has non-tight assign constraint since it is a leaf node.
 Let the path be as follows:
\[
P_{1}=\langle i^{1}_1,j^{1}_1,i^{1}_2,j^{1}_2,\ldots,i^{1}_{l_1-1},j^{1}_{l_1-1},i^{1}_{l_1},j^{1}_{l_1}\rangle.
\]

Since subcase 3 does not apply, tree $T_1$ does not contain any other non-tight query from $S_1$. Now capacities
are always integer and set $S_1$ has tight capacity constraint. This implies that set $S_1$ must contain another
non-tight query and that non-tight query must belong to a different tree. Denote this second tree by $T_2$ and
call this another non-tight query of $S_1$ by $j^{2}_{1}$. If $T_2$ contains a leaf node that is a advertiser, consider
the path from $j^{2}_{1}$ to that advertiser node. Say the path is,
\[
P_{2}=\langle j^{2}_{1}, i^{2}_{1},j^{2}_{2},\ldots, i^{2}_{l_2-2}, j^{2}_{l_2-1}, i^{2}_{l_2-1},j^{2}_{l_2},i^{2}_{l_2} \rangle.
\]

Consider a combined path $\langle P_1,P_2 \rangle$.
$$
\langle P_1,P_{2} \rangle= \langle i^{1}_1,j^{1}_1,\ldots, i^{1}_{l_1-1}, j^{1}_{l_1-1},i^{1}_{l_1},\underbrace{j^{1}_{l_1}, j^{2}_{1}}, i^{2}_{1},j^{2}_{2},\ldots, j^{2}_{l_2},i^{2}_{l_2} \rangle .
$$

Essentially this combined path is thought of a single path ending at two leaf advertisers. We apply the rounding of Case (i) in this scenario with a
slight change in handling the job nodes. We rewrite the linear system for convenience.

  \begin{align}
& x_{i^{1}_{t-1},j^{1}_{t}}+x_{i^{1}_{t},j^{1}_{t}}=y_{i^{1}_{t-1},j^{1}_{t}}+y^{1}_{i_{t},j^{1}_{t}}  \forall t \in [1,l_1-1] & \label{subcase5:case1-1}\\
& x_{i^{2}_{t-1},j^{2}_{t}}+x_{i^{2}_{t},j^{2}_{t}}=y_{i^{2}_{t-1},j^{2}_{t}}+y^{2}_{i_{t},j^{2}_{t}}  \forall t \in [2,l_2] & \label{subcase5:case1-2}\\
&x_{i^{1}_{l_1},j^{1}_{l_1}} \leq 1, x_{i^{2}_{1},j^{2}_{1}} \leq 1 \label{subcase5:case1-3}\\
& x_{i^{1}_{l_1},j^{1}_{l_1}} + x_{i^{2}_{1},j^{2}_{1}} \leq y_{i^{1}_{l_1},j^{1}_{l_1}} + y_{i^{2}_{1},j^{2}_{1}} \label{subcase5:case1-4}\\
& x_{i^{a}_{t},j^{a}_{t}}b_{i^{a}_{t},j^{a}_{t}}+x_{i^{a}_{t},j^{a}_{t+1}}b_{i^{a}_{t},j^{a}_{t+1}} = y_{i^{a}_{t},j^{a}_{t}}b_{i^{a}_{t},j^{a}_{t}}+ y_{i^{a}_{t},j^{a}_{t+1}}b_{i^{a}_{t},j^{a}_{t+1}} & \notag \\
& \forall (t,a) \in ([2,l_1],1) \cup ([1,l_2-1],2) &   \label{subcase5:case1-5}\\
& x \in [0,1]^{2l_1+2l_2-2} \label{subcase5:case1-6}
\end{align}

 We apply {\bf Rand-Move} as usual. Note that, essentially we are assuming $j^{1}_{l_1}$ and $j^{2}_{1}$ as a single node
 while writing the constraint \ref{subcase5:case1-4}.
\begin{lemma}
\label{lemma:subcase5}
The linear system defined above in underdetermined and Assign constraints for all queries, advertiser constraints for all advertisers except $i^{1}_1$ and $i^{2}_{l_2}$ and Capacity constraints for all sets are maintained.
\end{lemma}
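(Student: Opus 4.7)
The plan is to verify two things in turn: that the chosen linear system is underdetermined, so that \textbf{Rand-Move} is applicable, and that the resulting stochastic update respects every LP constraint except possibly the budgets of $i^1_1$ and $i^2_{l_2}$.

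For the dimension count I first note that the combined path $\langle P_1,P_2\rangle$ has $2l_1+2l_2-2$ edges, matching the dimension of $x$ declared in \eqref{subcase5:case1-6}. The equality constraints come to $(l_1-1)$ assign equalities from \eqref{subcase5:case1-1}, $(l_2-1)$ from \eqref{subcase5:case1-2}, and $(l_1+l_2-2)$ budget equalities from \eqref{subcase5:case1-5}, totalling $2l_1+2l_2-4$. Each inequality in \eqref{subcase5:case1-3} is strict, since the two leaf-edge variables $x_{i^1_{l_1},j^1_{l_1}}$ and $x_{i^2_1,j^2_1}$ lie in $(0,1)$; the inequality \eqref{subcase5:case1-4} is currently tight because $j^1_{l_1}$ and $j^2_1$ together with the already-rounded queries of $S_1$ saturate its capacity, but even adjoining it yields only $2l_1+2l_2-3$ linearly independent tight constraints, strictly fewer than the $2l_1+2l_2-2$ variables. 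Hence the nullspace of the tight system is nontrivial and a direction $r$ for \textbf{Rand-Move} exists.

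For the constraint audit I classify the LP constraints by where they could move. The assign equalities of internal path queries ($j^1_t$ for $1\le t\le l_1-1$ and $j^2_t$ for $2\le t\le l_2$) are preserved in $\hat A$ by \eqref{subcase5:case1-1}--\eqref{subcase5:case1-2}; the two leaf-query assign inequalities $\le 1$ are maintained by \eqref{subcase5:case1-3} through the usual choice of $\alpha,\beta$; assign constraints of queries outside the path are not touched since their variables do not change. Budget constraints of internal path advertisers ($i^1_t$ for $2\le t\le l_1$ and $i^2_t$ for $1\le t\le l_2-1$) are preserved exactly by \eqref{subcase5:case1-5}, while the two end advertisers $i^1_1$ and $i^2_{l_2}$ are deliberately omitted from \eqref{subcase5:case1-5}, matching the exception in the lemma statement; off-path advertisers see no change.

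The capacity step is the most delicate, and the one I expect to require the most care. Because every assign equality in \eqref{subcase5:case1-1}--\eqref{subcase5:case1-2} is preserved, the total assignment to each internal path query is unchanged, so any capacity constraint that involves only these queries and off-path (already-fixed) variables is preserved automatically. The only set whose capacity could drift is one containing a leaf-path query. By the setup of subcase~(5)---forced by the failure of case~(i) together with subcases~(1)--(4)---both leaf-path queries $j^1_{l_1}$ and $j^2_1$ were chosen to lie in the single tight-capacity set $S_1$, and by the disjointness of the customer sets $\{S_k\}$ no other set contains either of them. For $S_1$ the explicit cap \eqref{subcase5:case1-4} bounds $x_{i^1_{l_1},j^1_{l_1}}+x_{i^2_1,j^2_1}$ by its current sum, which together with the frozen contributions of the remaining queries of $S_1$ keeps the $S_1$ capacity constraint satisfied. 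The main obstacle I anticipate is justifying cleanly why no path-leaf query can lie in a set other than $S_1$; this rests on the exhaustive exclusion afforded by the preceding subcases, and I would verify it by tracing why each of (1)--(4) would otherwise have been triggered on the configuration at hand.
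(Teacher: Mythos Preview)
Your proof is correct and follows essentially the same approach as the paper: count $(l_1-1)+(l_2-1)+1+(l_1-1)+(l_2-1)=2l_1+2l_2-3$ tight linearly independent constraints against $2l_1+2l_2-2$ variables to establish underdeterminedness, then audit the Assign, Bidder, and Capacity constraints via \eqref{subcase5:case1-1}--\eqref{subcase5:case1-5} exactly as you do. Your closing worry is unnecessary: the disjointness of the customer sets $S_k$, which you already invoked, is precisely what ensures the two leaf queries $j^1_{l_1}$ and $j^2_1$ affect only the capacity of $S_1$, so no further trace through subcases~(1)--(4) is required for this lemma.
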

\begin{proof}
\label{proof:lemma:subcase5}
Again the number of linearly independent tightly satisfied constraints are $(l_1-1)+)(l_2-1)+1+(l_1-1)+(l_2-1)=2l_1+2l_2-3$ from \ref{subcase5:case1-1}, \ref{subcase5:case1-2}, \ref{subcase5:case1-4} and \ref{subcase5:case1-5}. The number of variables is $2l_1+2l_2-2$.
Thus {\bf Rand-Move} can be applied.
From constraints \ref{subcase5:case1-1}, \ref{subcase5:case1-2}, \ref{subcase5:case1-3} we get that all the assign constraints and all the capacity
constraints except for set $S$ are satisfied. Constraint \ref{subcase5:case1-4} ensures that the capacity constraint of the set $S$ is maintained.
Constraint \ref{subcase5:case1-5} maintains all the advertiser constraints except for advertisers $i^{1}_1$ and $i^{2}_{l_2}$.
\end{proof} 

When, the above does not apply, then in $T_2$ there is no leaf node that is a advertiser. If there is a leaf node that is an query but the query is in a set that
has non-tight capacity constraint, then we consider that path $P_2'$ (say) (we use the same symbols as in $P_2$ for $P_2'$, but it is not to be confused with $P_2$, since we are considering $P_2'$ when no such path like $P_2$ exists).
\[
P_{2}'=\langle j^{2}_{1}, i^{2}_{1},j^{2}_{2},\ldots, i^{2}_{l_2}, j^{2}_{l_2}\rangle.
\]

Consider a combined path $\langle P_1,P'_{2}\rangle$ as before, that is we treat $j^{1}_{l_1}$ and $j^{2}_{1}$ as a single node
while maintaining their total contribution to the set $S$. Note because of considering the combined path $\langle P_1,P'_{2}\rangle$, this becomes identical to the subcase 4. So we apply the rounding on this combined path as in subcase 4. The correctness of this rounding step also follows from
Lemma \ref{lemma:subcase4}.

Otherwise, all the leaf nodes in $T_2$ are queries and the sets containing them have tight capacity constraint. Follow a maximal
path from $j^{2}_{1}$ to one such leaf node, say $j^{2}_{l}$, and let it belong to set $S_2$. Denote the maximal path by $P_2''$.

Since subcase 3 does not apply to $T_2$, $T_2$ does not contain another non-tight query from $S_2$. But, the capacity of
$S_2$ is integer and thus it must have another non-tight query. Call that query to be $j^{3}_{1}$ and denote the tree containing
it to be $T_3$. If $T_3$ happens to be same as $T_1$, then consider the path $P'$ in $T_1$ between $j^{3}_{1}$ and $j^{1}_{l_1}$. Now
consider the combined path $\langle P', P_2'' \rangle$. In this combined path the two end-points belong to two non-tight queries from
set $S_2$ that has tight capacity constraint. Thus, this is identical to subcase 3 and we apply the rounding of subcase 3.
The correctness follows again from Lemma \ref{lemma:subcase3}.

Otherwise, $T_3$ is a tree different from both $T_1$ and $T_2$ and we continue similarly from $j^{3}_{1}$. Thus, if at any point of time, we reach
 a leaf node
that is a advertiser or an query in a non-tight set, or an query in a tight-set but for which the another non-tight query belongs to a tree
already visited, we can continue our rounding.

However, it may happen that a tight set contains more than two non-tight queries. In that case, it is
possible to visit a tight set more than twice in our process. So suppose we are at tree $T_g$ and while considering maximal path, $P_i=\langle j^{g}_{1},i^{g}_{1},j^{g}_{2},\ldots,i^{g}_{l_{g}-1},j^{g}_{l_{g}}\rangle$, we get to $j^{g}_{l_{g}}$ that belongs to a set $S^{g}$ that is
already visited. That is, we have already seen two non-tight queries as end-points (one at the end of a maximal path and the other as the start of a maximal path in two consecutive trees) of two maximal paths say in $T_{h}$ and $T_{h+1}$, $h+1 < g$.
Let the maximal paths that have been considered in trees $T_{h+1},T_{h+2},\ldots,T_{g}$ be $P_{h+1},P_{h+2},\ldots,P_{g}$.
Consider the combined path $\langle P_{h+1},P_{h+2},\ldots,P_{g} \rangle$ and note that in this combined path the two end-points belong to two non-tight queries from set $S_g$ that has tight capacity constraint. Thus we apply the rounding of subcase 4. Indeed it is not
required to visit a non-tight query for the third time as an end-point of a maximal path. If at any time in this process, we visit
a third non-tight query from a set with tight capacity constraint, we can write a combined path with two end-points containing non-tight queries from that set and apply rounding of subcase 3.

Otherwise, all the trees visited are different and we keep on continuing this process. Since the number of trees are at most $\min{\{n,m\}}$,
this process must terminate in some tree $T^{t}$ and at some leaf query node $j^{t}_{l_{t}}$ within a tight set $S_{t}$. Since $S_{t}$ has
at least two non-tight queries, the other non-tight query, say $j$, must belong to some tree $T^{t'}, t'< t$. Considering a path from $j$ to
$j^{t'}_{l_{t'}}$ and then following the maximal paths in $T^{t'+1}, T^{t'+2},\ldots, T^{t}$, we again get a combined path on which we can apply
rounding of subcase 3.

\paragraph*{{\bf Case (iii).} No tree contains any leaf advertiser nodes.}

This case is similar to Case (ii). We start with a leaf query, possibly with a leaf query that is in a non-tight set if one exists, and
obtain a combined path on which we can apply one of Subcases (1)-(4).

This completes the description of the rounding method. At every step, the entire rounding procedure takes $poly(n,m)$ time
and at each step we either make a constraint tight or round a variable. Thus we are guaranteed to complete rounding
all the variables to integers in polynomial number of steps.

From the above discussion and Lemma \ref{lemma:case1}-\ref{lemma:subcase5}, we get the following,
\begin{lemma}
The rounding procedure maintains all the assign and the capacity constraints. A advertiser node maintains the advertiser
constraint as long as in the current fractional solution, it is connected to two or more queries with nonzero fractional values.
\end{lemma}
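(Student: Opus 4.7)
The plan is to prove the lemma by induction on the number of rounding iterations, reducing each step to the corresponding per-case guarantee already established in Lemmas~\ref{lemma:case1} and \ref{lemma:subcase1}--\ref{lemma:subcase5}. The base case is immediate: the starting point $y^*$ is a feasible solution of $LP_{BC}$, so every assign, capacity and advertiser constraint is satisfied at iteration $0$. For the inductive step, assume the claim after $t$ iterations. At iteration $t+1$ the algorithm selects exactly one scenario from Case~(i), the five subcases of Case~(ii) (including the multi-tree traversal built inside Subcase~5), or Case~(iii), constructs a linear subsystem, and executes a single Rand-Move on its variables. Variables that do not appear in the subsystem are untouched, so any constraint of $LP_{BC}$ that involves none of those variables is trivially preserved; hence it suffices to examine the subsystem in each case.

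First I would verify the assign constraints. Internal queries of any selected path have their two incident edges bound by an equality of the form $x_{i_{t-1},j_t}+x_{i_t,j_t}=y_{i_{t-1},j_t}+y_{i_t,j_t}$ (e.g.\ \eqref{eq:case1-1}, \eqref{subcase-1:2}, \eqref{subcase-2:2}, \eqref{subcase-4:2}, \eqref{subcase5:case1-1}, \eqref{subcase5:case1-2}), so the total fractional assignment at that query is preserved; leaf queries are handled through the explicit upper bounds of the form $x_{\cdot,j_1}\le 1$ and $x_{\cdot,j_l}\le 1$ (e.g.\ \eqref{subcase-1:1}, \eqref{subcase-2:1}, \eqref{subcase-4:1}, \eqref{subcase5:case1-3}). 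For the capacity constraints, any set all of whose queries are internal to the path, or that is not met by the path at all, is preserved because every individual assign sum is preserved; for a set touched at a leaf, an explicit slack constraint (\eqref{subcase-1:4}, \eqref{subcase-2:4}, \eqref{subcase-2:5}, \eqref{subcase-4:4}, \eqref{subcase5:case1-4}) keeps the total allocation below the capacity, and in the Subcase~3/Subcase~5 ``cycle-breaking'' variant the direction of the update is chosen precisely so that the shared tight set never grows. Each of Lemmas~\ref{lemma:case1}, \ref{lemma:subcase1}, \ref{lemma:subcase2}, \ref{lemma:subcase3}, \ref{lemma:subcase4}, \ref{lemma:subcase5} records exactly this verification for its own case, so I would simply chain them together.

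The main point not already spelled out in the subcase lemmas is the qualified statement about advertiser constraints, so I would handle it separately. Inspection of the subsystems shows that each advertiser $i$ \emph{internal} to the selected path contributes an explicit budget equality $x_{i,j_t}b_{i,j_t}+x_{i,j_{t+1}}b_{i,j_{t+1}}=y_{i,j_t}b_{i,j_t}+y_{i,j_{t+1}}b_{i,j_{t+1}}$ (\eqref{eq:case1-2}, \eqref{subcase-1:3}, \eqref{subcase-2:3}, \eqref{subcase-4:3}, \eqref{subcase5:case1-5}), so the quantity $\sum_j x_{i,j}b_{i,j}$ entering its advertiser constraint is preserved exactly by the Rand-Move. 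The only advertisers whose budget might be violated are therefore those that appear as an endpoint of the chosen path, i.e.\ as a \emph{leaf} of the current fractional forest $G(\mathcal{B},\mathcal{I},E^*)$. By definition, an advertiser is a leaf of this forest precisely when it is incident to exactly one edge with $y_{i,j}\in(0,1)$, that is, when it is connected to exactly one query with nonzero fractional value. Consequently, as long as $i$ is fractionally connected to two or more queries, $i$ is an internal node, its budget equality is inside the subsystem, and its advertiser constraint is preserved at this iteration; induction then gives the claim.

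The step I expect to require the most care is Subcase~5, where Rand-Move is executed on a combined path that threads through several trees $T_1,\dots,T_g$ via shared non-tight queries of tight sets. The potential pitfall is that an intermediate node along the combined path might fail to contribute its expected equality, or that the shared tight set might be inadvertently violated when we glue two path-segments at a pair of non-tight queries. I would therefore devote a careful pass to verifying that every intermediate advertiser and query encountered along the combined path is still governed by a type-\eqref{subcase5:case1-1}--\eqref{subcase5:case1-5} constraint, that the only exposed advertisers are the two extreme leaf advertisers (or, when the traversal terminates in a Subcase~3 or Subcase~4 style, the corresponding pair of non-tight queries whose combined contribution is controlled by \eqref{subcase5:case1-4} or the analogue of \eqref{subcase-4:4}), and that the sign of $\beta$ in the cycle-breaking variant is chosen consistently with the tight set under consideration. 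Once this bookkeeping is carried out, each rounding step falls into one of the pre-analyzed cases, and the overall lemma follows by induction on the number of iterations, which is polynomial because each Rand-Move either rounds a variable or tightens a constraint.
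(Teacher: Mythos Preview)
Your proposal is correct and follows essentially the same approach as the paper: the paper does not give an explicit proof of this lemma at all, but simply states that it follows ``from the above discussion and Lemma~\ref{lemma:case1}--\ref{lemma:subcase5}.'' Your inductive chaining of the per-case lemmas, together with the observation that only leaf advertisers (those with a single fractional incident edge in $E^*$) can have their budget constraint violated, is exactly the content the paper is summarizing in that one line.
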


Now, we need to prove that our expected approximation ratio is
$$\frac{4-\max_{i}{\frac{b_{i,max}}{B_i}}}{4}$$
where $b_{i,max}=\max_{j}{b_{i,j}}$. We can always assume $b_{i,max} \leq B_{i}$ without loss of generality for all $i$, we get a $3/4$ approximation. If bids are small, that
is $\max_{i}{\frac{b_{i,max}}{B_i}}\leq \epsilon$, then we get a $(4-\epsilon)/4$ approximation.

\begin{proof}[Theorem~\ref{theorem:approx}]

Let $P_{i}^{0}$ denote the payment made by advertiser $i$ as assigned by LP1. In our rounding process, when an edge-variable
 gets rounded to $0$ or $1$, it is removed permanently or assigned permanently. The forest structure that we consider always
 contains only the fractional edge-variables. If the advertiser $i$ never has degree $1$ in the forest, then by our rounding procedure
 its final payment is same as $P_{i}^{0}$. Therefore, suppose at some stage $s$, advertiser $i$ becomes a leaf node and let $a$ be the so far rounded payment on $i$ and let $b$ be the unique query assigned to advertiser $i$ with fractional assignment $p$ and bid $d$. Note that, all
 $a, b, p, d$ are random variables. If $P_{i}^{s}$ denote the total payment (fractional and integral) done by advertiser $i$ at the
 end of the $s$th iteration, then we have
 \[
  P_{i}^{s}=a+dp=P_{i}^{0}
 \]

 Once a advertiser becomes a leaf node, it only takes part in {\bf Rand-Move}. Let $P_{i}^{s+1}, P_{i}^{s+2},\ldots, P_{i}^{t}$ denote the payment rounded on advertiser $i$ at the end of the iterations $s+1, s+2,\ldots, t$. Assume $t$ is the last iteration. Then we have from property {\bf [P1]} of
 {\bf Rand-Move} that
 \[
 \expect{P_{i}^{g}|P_{i}^{g-1}=a+dp_{g-1}}=a+dp_{g-1}
 \]
 for $g > s$.
 Thus

 $
 \expect{P_{i}^{g}}=\int_{x}\expect{P_{i}^{g}|P_{i}^{g-1}=a+dx}\prob{P_{i}^{g-1}=a+dx}=\\
 \int_{x}a+dx \prob{P_{i}^{g-1}=a+dx}=\expect{P_{i}^{g-1}}.
$

Hence we have
\[
\expect{P_{i}^{t}}=\expect{P_{i}^{t-1}}=\cdots=\expect{P_{i}^{s}}=a+dp=P_{i}^{0}
\]

Then it directly follows from the above,

{\it With probability $1-p$ the rounded payment on advertiser $i$ is $a$ and with probability $p$ the rounded payment is $a+d$},
since $\expect{P_{i}^{t}}=a \prob{\text{edge $(i,b)$ is rounded to $0$}}+ (a+d)\prob{\text{edge $(i,b)$ is rounded to $1$}}$.

Thus the final expected profit from advertiser $i$ is $(1-p)\min{\{B_{i},a\}}+p \min{\{B_{i},a+d\}}$. The profit obtained from
$i$ in the optimal LP solution is $\min{\{B_{i},a+dp\}}$. Therefore, by the linearity of expectation, the expected
approximation ratio is the maximum possible value of
\[
\frac{(1-p)\min{\{B_{i},a\}}+p \min{\{B_{i},a+d\}}}{\min{\{B_{i},a+dp\}}}.
\]

This part of the proof is similar to the analysis of Theorem 1 of \cite{Srinivasan:2008}.
 Let $b_{i,max}=\max_{j}{b_{i,j}}$. We can assume without loss of generality that
$b_{i,max} \leq B_{i}$ for all $i$. It is easy to see that if $a > B_{i}$ or $a+d < B_{i}$, then
the above approximation ratio is $1$. Hence assume, $ a < B_{i} < a+d$. We thus have the approximation
ratio to be
\[
r=\frac{a(1-p)+pB_{i}}{\min{\{B_{i},a+dp\}}}
\]

Now considering the two cases, $B_{i} \leq / > a+dp$, we get the following result:
\[
\frac{(1-p)\min{\{B_{i},a\}}+p \min{\{B_{i},a+d\}}}{\min{\{B_{i},a+dp\}}} \leq \frac{4-\max_{i}{\frac{b_{i,max}}{B_i}}}{4}
\]

Since we can assume without loss of generality $b_{i,max} \leq B_{i}$ for all $i$, we get a $3/4$ approximation. If bids are small, that
is $\max_{i}{\frac{b_{i,max}}{B_i}}\leq \epsilon$, then we get a $(4-\epsilon)/4$ approximation.
\end{proof}

\bibliographystyle{abbrv}
\bibliography{AdCell}

\begin{thebibliography}{10}

\bibitem{cellstatics}
M.~Agrawal.
\newblock Overview of mobile advertising.
\newblock
  \texttt{http://www.telecomcircle.com/\\2009/11/overview-of-mobile-advertisin%
g}.

\bibitem{Andelman04}
N.~Andelman and Y.~Mansour.
\newblock Auctions with budget constraints.
\newblock In {\em In 9th Scandinavian Workshop on Algorithm Theory (SWAT)},
  2004.

\bibitem{adwordAzar}
Y.~Azar, B.~Birnbaum, A.~R. Karlin, C.~Mathieu, and C.~T. Nguyen.
\newblock Improved approximation algorithms for budgeted allocations.
\newblock {\em ICALP '08}.
\newblock 186-197, 2008.

\bibitem{chakra:siam10}
D.~Chakrabarty and G.~Goel.
\newblock On the approximability of budgeted allocations and improved lower
  bounds for submodular welfare maximization and gap.
\newblock {\em SIAM J. Comp.}
\newblock p2189, 2010.

\bibitem{Chuzhoy:2002}
J.~Chuzhoy and J.~S. Naor.
\newblock Covering problems with hard capacities.
\newblock {\em FOCS '02}.
\newblock p481, 2002.

\bibitem{Devenur09}
N.~R. Devenur and T.~P. Hayes.
\newblock The adwords problem: online keyword matching with budgeted bidders
  under random permutations.
\newblock {\em EC '09}.
\newblock 71-78, 2009.

\bibitem{vahabbeat}
J.~Feldman, A.~Mehta, V.~Mirrokni, and S.~Muthukrishnan.
\newblock Online stochastic matching: Beating 1-1/e.
\newblock {\em FOCS '09}.
\newblock 117-126, 2009.

\bibitem{Gandhi:2003}
R.~Gandhi, E.~Halperin, S.~Khuller, G.~Kortsarz, and A.~Srinivasan.
\newblock An improved approximation algorithm for vertex cover with hard
  capacities.
\newblock {\em ICALP '03}.
\newblock 164-175, 2003.

\bibitem{randmatch}
G.~Goel and A.~Mehta.
\newblock Online budgeted matching in random input models with applications to
  adwords.
\newblock {\em SODA '08}.
\newblock 982-991, 2008.

\bibitem{HKP04}
M.~T. Hajiaghayi, R.~Kleinberg, and D.~C. Parkes.
\newblock Adaptive limited-supply online auctions.
\newblock {\em EC '04}.
\newblock 71-80, 2004.

\bibitem{HKS07}
M.~T. Hajiaghayi, R.~D. Kleinberg, and T.~Sandholm.
\newblock Automated online mechanism design and prophet inequalities.
\newblock In {\em AAAI}, 2007.

\bibitem{mehta11}
C.~Karande, A.~Mehta, and P.~Tripathi.
\newblock Online bipartite matching with unknown distributions.
\newblock In {\em STOC}, 2011.

\bibitem{firstmatch}
R.~M. Karp, U.~V. Vazirani, and V.~V. Vazirani.
\newblock An optimal algorithm for on-line bipartite matching.
\newblock {\em STOC '90}.
\newblock 352-358, 1990.

\bibitem{Krengel77}
U.~Krengel and L.~Sucheston.
\newblock Semiamarts and finite values.
\newblock {\em Bull. Am. Math. Soc.}, 1977.

\bibitem{lahaie}
S.~Lahaie.
\newblock An analysis of alternative slot auction designs for sponsored search.
\newblock {\em EC '06}.
\newblock 218-227, 2006.

\bibitem{mahdian11}
M.~Mahdian and Q.~Yan.
\newblock Online bipartite matching with random arrivals: An approach based on
  strongly factor-revealing lps.
\newblock In {\em STOC}, 2011.

\bibitem{shayan10}
V.~H. Manshadi, S.~O. Gharan, and A.~Saberi.
\newblock Online stochastic matching: Online actions based on offline
  statistics.
\newblock {\em SODA '11}, 2011.

\bibitem{saberi05}
A.~Mehta, A.~Saberi, U.~Vazirani, and V.~Vazirani.
\newblock Adwords and generalized online matching.
\newblock {\em J. ACM}, 54, 2007.

\bibitem{Pal:2001}
M.~P\'{a}l, E.~Tardos, and T.~Wexler.
\newblock Facility location with nonuniform hard capacities.
\newblock {\em FOCS '01}.
\newblock 329-338, 2001.

\bibitem{saha:ics10}
B.~Saha and A.~Srinivasan.
\newblock A new approximation technique for resource-allocation problems.
\newblock {\em ICS '10}.
\newblock 342-357, 2010.

\bibitem{Srinivasan:2008}
A.~Srinivasan.
\newblock Budgeted allocations in the full-information setting.
\newblock {\em APPROX '08}.
\newblock 247-253, 2008.

\bibitem{LBStech}
S.~Wang, J.~Min, and B.~K. Yi.
\newblock Location based services for mobiles: Technologies and standards.
\newblock {\em ICC '08}, 2008.

\bibitem{garmin}
F.~Zahradnik.
\newblock Garmin to offer free, advertising-supported traffic detection and
  avoidance.
\newblock
  \texttt{http://gps.about.com/b/2008/09/15/garmin-to-offer-free-\\advertising%
-supported-traffic-detection-and-avoidance.htm}.

\end{thebibliography}

%
%
%
%
%
%
%
%

\end{document}